\tikzstyle{base} = [>=stealth,thick]
\newcommand{\oldtext}[1]{}
\renewcommand{\phi}{\varphi}
\DeclareFontFamily{U}  {MnSymbolC}{}
\DeclareFontShape{U}{MnSymbolC}{m}{n}{
    <-6>  MnSymbolC5
   <6-7>  MnSymbolC6
   <7-8>  MnSymbolC7
   <8-9>  MnSymbolC8
   <9-10> MnSymbolC9
  <10-12> MnSymbolC10
  <12->   MnSymbolC12}{}
\DeclareSymbolFont{MnSyC}{U}{MnSymbolC}{m}{n}
\DeclareMathSymbol{\righthalfcup}{\mathrel}{MnSyC}{184}
\DeclareMathSymbol{\ostar}{\mathrel}{MnSyC}{102}
\newtheorem{theorem}{Theorem}[section]
\newtheorem{lemma}[theorem]{Lemma}
\newtheorem{proposition}[theorem]{Proposition}
\newtheorem{definition}[theorem]{Definition}
\newcommand{\sem}[1]{\llbracket #1 \rrbracket}
\newcommand{\natsem}[1]{\llbracket #1 \rrbracket^{\mathscr{C}}}
\newcommand{\abssem}[1]{\llbracket #1 \rrbracket^{\mathscr{A}}}
\newcommand{\KA}{\textsf{KA}}  
\newcommand{\PKA}{\textsf{PKA}}  
\newcommand{\QPKA}{\textsf{QPKA}}  
\newcommand{\entry}{v^{\textsf{entry}}}
\newcommand{\exit}{v^{\textsf{exit}}}
\newcommand{\formula}[1]{\fbox{$#1$}}
\newcommand{\paths}[2]{\textsf{Paths}[#1,#2]} 
\newcommand{\pathexp}[2]{\mathbf{P}[#1,#2]} 
\newcommand{\soundrel}{\Vdash} 
\newcommand{\sound}[2]{#1 \soundrel #2} 
\newcommand{\regexp}{\textsf{RegExp}}
\newcommand{\Act}{\textsf{Act}}
\newcommand{\Exp}{\textsf{Exp}}
\newcommand{\Env}{\textsf{Env}}
\newcommand{\Var}{\textsf{Var}}
\newcommand{\act}{\textsf{act}}
\newcommand{\src}{\textsf{src}}
\newcommand{\tgt}{\textsf{tgt}}
\newcommand{\mul}{\odot}
\newcommand{\plus}{\oplus}
\renewcommand{\star}{\ostar}
\newcommand{\closure}{\star}
\newcommand{\powerset}[1]{2^{#1}}
\newcommand{\widen}{\triangledown}
\newcommand{\relational}{\mathscr{R}}
\newcommand{\relplus}{\plus^\relational}
\newcommand{\relmul}{\mul^\relational}
\newcommand{\relstar}{{\star^\relational}}
\newcommand{\path}{\Pi}
\newcommand{\logi}{\mathscr{L}}
\newcommand{\lplus}{\plus^\logi}
\newcommand{\lmul}{\mul^\logi}
\newcommand{\lstar}{{\star^\logi}}
\newcommand{\lsem}[1]{\sem{#1}^\logi}
\newcommand{\iinterp}[3]{#1(#2)[#3]}
\newcommand{\tuple}[1]{\langle #1 \rangle}
\newcommand{\false}{\textit{false}}
\newcommand{\true}{\textit{true}}
\theoremstyle{definition}
\newenvironment{example}[1][]{%
  \par%
  \pushQED{\qed}%
  \trivlist%
\item\ignorespaces%
  \refstepcounter{theorem}%
  \textbf{Example \arabic{section}.\arabic{theorem}}\ifx&#1& \else \;(#1) \fi \hspace{2pt}
}{%
  \popQED\endtrivlist%
}
\title{An Algebraic Framework For Compositional Program Analysis}
\begin{document}

\maketitle

\begin{abstract}
The purpose of a program analysis is to compute an {\em abstract meaning} for a program which approximates its dynamic behaviour.  A {\em compositional} program analysis accomplishes this task with a divide-and-conquer strategy: the meaning of a program is computed by dividing it into sub-programs, computing their meaning, and then combining the results.  Compositional program analyses are desirable because they can yield scalable (and easily parallelizable) program analyses.

This paper presents algebraic framework for designing, implementing, and proving the correctness of compositional program analyses.  A program analysis in our framework defined by an algebraic structure equipped with {\em sequencing}, {\em choice}, and {\em iteration} operations.  From the analysis {\em design} perspective, a particularly interesting consequence of this is that the meaning of a loop is computed by applying the iteration operator to the loop body.  This style of compositional loop analysis can yield interesting ways of computing loop invariants that cannot be defined iteratively.  We identify a class of algorithms, the so-called {\em path-expression} algorithms \cite{Tarjan1981,Scholz2007}, which can be used to efficiently implement analyses in our framework.  Lastly, we develop a theory for {\em proving the correctness} of an analysis by establishing an approximation relationship between an algebra defining a concrete semantics and an algebra defining an analysis.
\end{abstract}




\section{Introduction} \label{sec:intro}

Compositional program analyses compute an approximation of a program's
behaviours by breaking a program into sub-programs, analyzing them, and then
combining the results.  A classical example of compositionality is
interprocedural analyses based on the \emph{summarization} approach
\cite{Sharir1981}, in which a summary is computed for every procedure and then
used to interpret calls to that procedure when the program is analyzed.
Compositionality is interesting for two main reasons.  First is
computational efficiency: compositionality is crucial to building scalable
program analyses \cite{Cousot2002}, and can be easily parallelized.  Second,
compositionality opens the door to interesting ways of computing loop
invariants, based on the assumption that a summary of the loop body
is available \cite{Ammarguellat1990,Popeea2006,Kroening2008,Biallas2012}.
These compositional analyses are presented using ad-hoc, analysis-specific arguments, rather than in a generic framework.
This paper presents an algebraic framework for designing, implementing, and
proving the correctness of such compositional program analyses.

We begin by describing the \emph{iterative framework} for program analysis
\cite{Kildall1973,Cousot1977} to illustrate what we mean by a program analysis
{\em framework}, and to clarify what this paper provides an alternative to.
In the iterative framework, an analysis designed by providing an abstract
semantic domain $D$, a (complete) lattice defining the space of possible
program properties and a set of transfer functions which interpret the meaning
of program action as a function $D \rightarrow D$.  The correctness of an
analysis (with respect to a given concrete semantics) is proved by
establishing an approximation relation between the concrete domain and the
abstract domain (e.g., a Galois connection) and showing that the abstract
transfer functions approximate the concrete ones.  The result of the analysis
is computed by via a chaotic iteration algorithm (e.g., \cite{Bourdoncle1993}), which
repeatedly applies transfer functions until a fixed point is reached (possibly
using a widening operator to ensure convergence).

Let us contrast the iterative framework with the algebraic framework described
in this paper.  In the algebraic framework, an analysis designed by providing
an abstract semantic domain $K$, an algebraic structure (called a
\emph{Pre-Kleene algebra}) defining the space of possible program properties
which is equipped with \emph{sequencing}, \emph{choice}, and \emph{iteration}
operations; and a \emph{semantic function} which interprets the meaning of a
program action as an element of $K$.  The correctness of an analysis (with
respect to a given concrete semantics) is proved by establishing an
soundness relation between the concrete semantic algebra and the abstract
algebra.  The result of the analysis is computed via a path expression
algorithm \cite{Tarjan1981,Scholz2007}, which re-interprets a regular
expression (representing a set of program paths) using the operations of the
semantic algebra.

A summary of the difference between the our algebraic framework and the
iterative framework is presented below:
\begin{figure}[h]
  \begin{center}
  \begin{tabular}{|l|l|}
    \hline
    Iterative Framework & Algebraic Framework\\
    \hline
    (Complete) lattice & Pre-Kleene Algebra\\
    Abstract transformers & Semantic function\\
    Chaotic iteration algorithm & Path-expression algorithm\\
    Galois connection & Soundness relation\\
    \hline
  \end{tabular}
  \end{center}
  \caption{The algebraic framework by analogy}
\end{figure}

Some program analyses make \emph{essential} use of compositionality
\cite{Ammarguellat1990,Popeea2006,Kroening2008,Biallas2012}  and cannot easily be described
in the iterative framework.\footnote{Note that there are also iterative
  analyses which cannot be described in our framework - we are arguing an
  \emph{alternative} rather than a \emph{replacement} to the iterative
  framework.} The primary contribution of this paper is a framework in which  
  these (and similar) analyses can be described.  Of course, these analyses can be (and
indeed, have been) presented without the aid of a framework, which raises the
question: \emph{why do we need program analysis frameworks?}  The practical
benefits provided by our framework include the following:
\begin{itemize}
\item A means to study compositional analyses in an
  abstract setting.  This allows us to prove widely-applicable theoretical
  results (e.g., Sections ~\ref{sec:intra_abstract}
  and~\ref{sec:inter_abstract}), and to design re-usable program analysis
  techniques (e.g., consider the wide variety of abstract domain constructions
  \cite{Cousot1979} that have been developed in the iterative framework).
\item A clearly defined interface between the definition of a program analysis
  and the path-expression algorithms used to compute the result of an
  analysis.  This allows for generic implementations of (perhaps very
  sophisticated) path-expression algorithms to be developed independently of
  program analyses.
\item A conceptual foundation upon which to build compositional analyses.  For
  example, rather than ask an analysis designer to ``find a way to abstract
  loops,'' our framework poses the much more concrete, approachable problem
  of ``design an iteration operator.''
\end{itemize}


\noindent In summary, the contributions of this paper are:
\begin{itemize}
\item We develop a generic, algebraic framework for compositional program
  analyses.  This framework provides a unifying view for compositional program
  analyses, and in particular compositional loop analysis
  \cite{Ammarguellat1990,Popeea2006,Kroening2008,Biallas2012}.  Our work also
  unifies several uses of algebra in program analysis
  \cite{Reps2005,Kot2004,Bouajjani2003,Filipiuk2011}.
\item We develop a theory of abstract interpretation \cite{Cousot1977} that is
  specialized for algebraic analyses (Section~\ref{sec:intra_abstract}).  This
  allows an analysis designer to prove the correctness of an analysis by
  exhibiting a congruence relation between a concrete semantic algebra
  (defining a concrete semantics) and an abstract semantic algebra (defining
  an analysis).
\item We develop a variant of our algebraic framework that is suitable for
  interprocedural program analyses (Sections~\ref{sec:interproc}
  and~\ref{sec:inter_analysis}).  This framework includes a novel solution
  for handling local variables.
\item We prove a pair of precision theorems (one intraprocedural
  (Section~\ref{sec:intra_correct}) and one interprocedural
  (Section~\ref{sec:inter_correctness})) which identify conditions under which
  the path expression algorithms compute an ideal (merge-over-path) solution
  to a program analysis.
\item We develop a new program analysis, linear recurrence analysis, designed
  in our algebraic framework (Section~\ref{sec:motivation}).  We exhibit
  experimental results which demonstrate that compositional loop analyses
  (well-suited to our algebraic framework) have the potential to generate
  precise abstractions of loops.  The area of compositional loop analysis is
  relatively under-explored (as compared to iterative loop analysis), and our
  work provides a framework for exploring this line of research.
\end{itemize}

\section{Overview and Motivating Example} \label{sec:motivation}

The conceptual basis of our framework is Tarjan's \emph{path
  expression} algorithm \cite{Tarjan1981,Tarjan1981b}.  This algorithm takes two
parameters: a program, and an \emph{interpretation} (defining a program
analysis).  We may think of this algorithm as a two-step process.  (1) For
each vertex $v$, compute a path expression for $v$, which is is a regular
expression that represents the set of its control flow paths of the program
which end at $v$.  (2) For each vertex $v$, interpret the path expression for
$v$ (using the given interpretation) to compute an approximation of the
executions of the program which end at $v$.

Considering the program pictured in Figure~\ref{fig:example}, one possible
path expression for $v_8$ (the control location immediately before the
assertion) is as follows:
\begin{center}
\begin{minipage}{6cm}
$\tuple{\entry,v_1}\cdot\tuple{v_1,v_2}\cdot$\\
\hspace*{1cm}$\big(\tuple{v_2,v_3}\cdot\tuple{v_3,v_4}\cdot$\\
\hspace*{2cm}$(\tuple{v_4,v_5} \cdot \tuple{v_5,v_6} \cdot \tuple{v_6,v_4})^*$\\
\hspace*{1cm}$\cdot \tuple{v_4,v_7}\cdot\tuple{v_7,v_2}\big)^*$\\
$\cdot\tuple{v_2,v_8}$
\end{minipage}
\end{center}

An \emph{interpretation} consists of a \emph{semantic algebra} and a
\emph{semantic function}.  The semantic algebra consists of a \emph{universe}
which defines the space of possible program meanings, and \emph{sequencing},
\emph{choice}, and \emph{iteration} operators, which define how to compose
program meanings.  The semantic function is a mapping from control flow edges
to elements of the universe which defines the meaning of each control flow
edge.  The abstract meaning of a program is obtained by recursion on the path
expression, by re-interpreting the regular operators with their counterparts
in the semantic algebra.

\subsection*{A Compositional Program Analysis for Loops}

We sketch a compositional program analysis which can be defined and proved
correct in our framework. This program analysis, which we call \emph{linear recurrence
  analysis}, interprets the meaning of a program to be an arithmetic formula
which represents its input/output behaviour. 
The analysis makes use of the power of \emph{Satisfiability Modulo Theories} (SMT)
solvers to synthesize interesting loop invariants.

This example serves two purposes.  First, it motivates the development of our
framework.  The strategy this analysis uses to compute an invariant for a loop
relies crucially on having access to a summary of the loop body (i.e., it
relies on the \emph{compositionality} assumption of our framework); this makes
it difficult to define this analysis in the typical iterative program analysis
framework.  Second, this analysis will be used as a running example to explain
the formalism in the remainder of the paper.  The analysis has not yet
appeared elsewhere in the literature, but it is bears conceptual similarity to
``loop leaping'' analyses \cite{Kroening2008,Biallas2012}, and the
compositional induction variable analysis in \cite{Ammarguellat1990}.

\vspace{-4pt}
\paragraph{Universe.} The universe of this semantic algebra, $L$, is the set of arithmetic formulae with free variables in $\Var
\cup \Var'$ (quotiented by logical equivalence) where $\Var$ is the set of
program variables (inputs), and $\Var'$ is a set of ``primed'' copies of
program variables (outputs).  Informally, each such formula represents a
transition relation between states, and we say that a formula $\varphi_P$
approximates a program fragment $P$ if all the input/output pairs of $P$
belong to the transition relation of $\varphi_P$.


\vspace{-4pt}
\paragraph{ Semantic Function.} The semantic function of linear recurrence analysis is defined in the obvious
way (by considering the effect of the statement label of each edge).  For example (again, considering Figure~\ref{fig:example}),
\begin{align*}
  \lsem{\tuple{\entry,v_1}} &= \formula{q' = q \land r' = x \land t'=t \land x' = x \land y' = y}\\
  \lsem{\tuple{v_1,v_2}} &= \formula{q' = 0 \land r' = r \land t'=t \land x'=x \land y'=y}\\
  \lsem{\tuple{v_2,v_3}} &= \formula{\parbox{0.65\linewidth}{$r > y \land q'=q \land r'=r \land t'=t \land x'=x$\\$\land y'=y$}}\\
  &\hspace*{5pt}\vdots
\end{align*}

\vspace{-4pt}
\paragraph{Operators.} Formally, the sequencing and choice operators are defined as follows:
\begin{align*}
  \varphi \lmul \psi &= \exists x''. \varphi[x''/x'] \land \psi[x''/x] & \text{Sequencing}\\
  \varphi \lplus \psi &= \varphi \lor \psi & \text{Choice}
\end{align*}
(where $\varphi[x''/x']$ denotes $\varphi$ with each primed variable $x'$
replaced by its double-primed counterpart $x''$, and $\psi[x''/x]$ similarly
replaces unprimed variables with double-primed variables).

The semantic function, sequencing, and choice operators are enough to analyze
loop-free code.  Interestingly, since these operations are precise, loop-free code is analyzed \emph{without loss of
  information}.  We can illustrate how loop free code is analyzed by
considering how the meaning of the body of the inner-most loop is
computed:

\vspace*{2pt}\noindent$\lsem{\tuple{v_4,v_5}\cdot \tuple{v_5,v_6}} = \lsem{\tuple{v_4,v_5}} \lmul \lsem{\tuple{v_5,v_6}}$\\
\vspace*{2pt}\hspace*{0pt}\hfill$=\formula{t > 0 \land q'=q \land r'=r+1 \land t'=t \land x'=x \land y'=y}$

\vspace*{4pt}\noindent$\lsem{\tuple{v_4,v_5} \cdot \tuple{v_5,v_6} \cdot \tuple{v_6,v_4}}$\\
\vspace*{2pt}\hspace*{0pt}\hfill$=\lsem{\tuple{v_4,v_5} \!\cdot\! \tuple{v_5,v_6}} \lmul \lsem{\tuple{v_6,v_4}}$\\
\vspace*{2pt}\noindent\vspace*{2pt}\hspace*{-2pt}\hfill=\fbox{$t > 0 \land q'=q \land r'=r-1 \land t'=t-1 \land x'=x \land y'=y$}

\noindent In the following, we will refer to the preceding formula as
$\varphi_{\textsf{inner}}$.

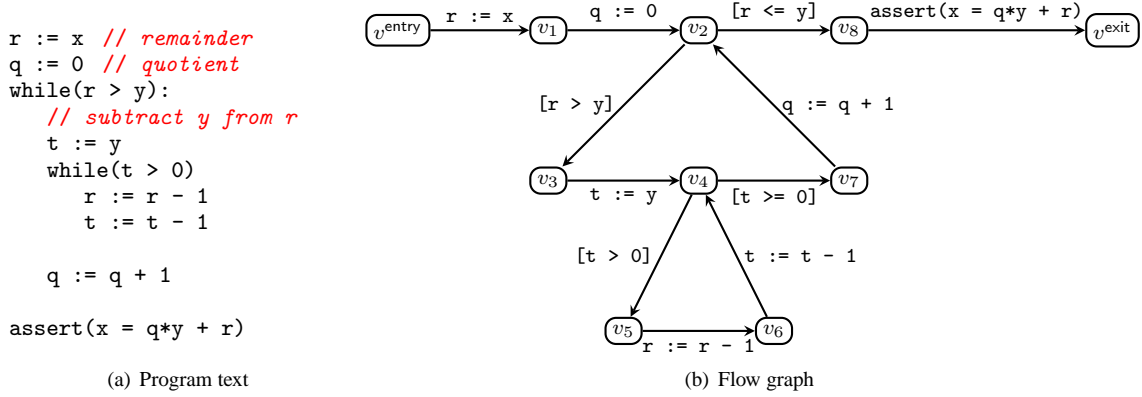
\begin{figure*}
  \begin{center}
  \subfigure[Program text]{
    \begin{minipage}[b]{4.5cm}
    \texttt{r := x } \texttt{\color{red}// \textit{remainder}}\\
    \texttt{q := 0 } \texttt{\color{red}// \textit{quotient}}\\
    \texttt{while(r > y):}\\
    \hspace*{0.5cm}\texttt{\color{red}// \textit{subtract y from r}}\\
    \hspace*{0.5cm}\texttt{t := y}\\
    \hspace*{0.5cm}\texttt{while(t > 0)}\\
    \hspace*{1cm}\texttt{r := r - 1}\\
    \hspace*{1cm}\texttt{t := t - 1}\\\\
    \hspace*{0.5cm}\texttt{q := q + 1}\\\\
    \texttt{assert(x = q*y + r)}
    \vspace*{0.25cm}
    \end{minipage}
  }
  \subfigure[Flow graph]{
  \begin{tikzpicture}[base,node distance=2cm]
  \footnotesize
  \node [rectangle,rounded corners,draw] (entry) {$\entry$};
  \node[rectangle,rounded corners,draw,right of=entry] (v1) {$v_1$};
  \node[rectangle,rounded corners,draw,right of=v1] (v2) {$v_2$};
  \node[rectangle,rounded corners,draw,below of=v2,xshift=-2cm] (v3) {$v_3$};
  \node[rectangle,rounded corners,draw,right of=v3] (v4) {$v_4$};
  \node[rectangle,rounded corners,draw,below of=v4,xshift=-1cm] (v5) {$v_5$};
  \node[rectangle,rounded corners,draw,right of=v5] (v6) {$v_6$};
  \node[rectangle,rounded corners,draw,right of=v4] (v7) {$v_7$};
  \node[rectangle,rounded corners,draw,right of=v2] (v8) {$v_8$};
  \node[rectangle,rounded corners,draw,right of=v8,xshift=1.5cm] (exit) {$\exit$};

  \path (entry) edge[->] node[above]{\texttt{r := x}} (v1);
  \path (v1) edge[->] node[above]{\texttt{q := 0}} (v2);
  \path (v2) edge[->] node[left]{\texttt{[r > y]}} (v3);
  \path (v3) edge[->] node[below]{\texttt{t := y}} (v4);
  \path (v4) edge[->] node[below]{\texttt{[t >= 0]}} (v7);
  \path (v4) edge[->] node[left]{\texttt{[t > 0]}} (v5);
  \path (v5) edge[->] node[below]{\texttt{r := r - 1}} (v6);
  \path (v6) edge[->] node[right]{\texttt{t := t - 1}} (v4);
  \path (v7) edge[->] node[right]{\texttt{q := q + 1}} (v2);
  \path (v2) edge[->] node[above]{\texttt{[r <= y]}} (v8);
  \path (v8) edge[->] node[above]{\texttt{assert(x = q*y + r)}} (exit);

  \end{tikzpicture}}

  \end{center}
  \caption{An integer division program, computing a quotient and remainder} \label{fig:example}
\end{figure*}

The final step in defining our analysis is to provide a definition of the
iteration operator ($\lstar$).  The idea behind the iteration operator of
linear recurrence analysis is to use an SMT solver to extract linear
recurrence relations, and then solve the recurrence relations to compute
closed forms for (some of) the program variables.

Let $\phi$ be a formula (representing the body of a loop).  The
\emph{stratified linear induction variables} of $\phi$ are defined as follows:
\begin{itemize}
\item If $\phi \models x' = x + n$ for some constant $n \in \mathbb{Z}$, then
  $x$ is a linear induction variable of stratum 0.
\item If $\phi \models x' = x + f(\vec{y})$, where $f$ is an affine function
  whose variables range over linear induction variables of strictly lower
  strata than $n$, then $x$ is a linear induction variable of stratum $n$.
\end{itemize}

For example, in the formula $\phi_{\textsf{inner}}$, there are five induction
variables (all at stratum 0) satisfying the following recurrences:
\begin{center}
\begin{tabular}{|l|l|}
  \hline
  Recurrence & Closed form\\\hline
  $q' = q + 0$ & $q_k = q$\\
  $r' = r + 1$ & $r_k = r + k$\\
  $t' = t - 1$ & $t_k = t - k$\\
  $x' = x + 0$ & $x_k = x$\\
  $y' = y + 0$ & $y_k = y$\\ \hline
\end{tabular}
\end{center}

Stratified induction variables can be detected automatically using an SMT
solver.  We use $I(\varphi)$ to denote the set of all (stratified) induction
variables in $\varphi$.  For each $x \in I(\varphi)$, we may solve the
recurrence associated with $x$ to get a closed form, which is a function $cl_x$
with variables ranging over the program variables and a distinguished variable
$k$ indicating the iteration of the loop.  Then $\varphi^\lstar$ is defined as
follows:
\[
   \varphi^\lstar = \formula{\exists k. k \geq 0 \land \bigwedge_{x \in I(\phi)} x' = cl_x}
\]

In the particular case of $\phi_{\textsf{inner}}$, we have:
\begin{align*}
  \varphi_{\textsf{inner}}^\lstar &=  \fbox{\parbox{0.7\linewidth}{$\exists k. k \geq 0 \land q' = q \land r' = r - k \land t' = t - k$\\$\land x' = x \land y' = y$}}\\
  &= \formula{q' = q \land r' = r + t' - t \land t' \leq t \land x' = x \land y' = y}
\end{align*}

Describing how induction variables are detected and how recurrence relations
are solved is beyond the scope of this paper (which is about the program analysis \emph{framework} used for presenting this analysis, rather than the analysis itself).

\vspace{-4pt}
\paragraph{Computing the Meaning of the Program.}  Using the summary for the inner loop computed previously, we may proceed to compute
the abstract meaning of the body of the outer loop as:
\[
\varphi_{\textsf{outer}} = \formula{q' = q + 1 \land r' = r - y \land x' = x \land y' = y}
\]

\noindent We apply the iteration operator to compute the meaning of the outer loop.  We
detect four induction variables, three of which ($x,y,q$) are at stratum 0,
and one of which ($r$) is at stratum 1:
\begin{center}
\begin{tabular}{|l|l|}
  \hline
  Recurrence & Closed form\\\hline
  $q' = q + 1$ & $q_k = q + k$\\
  $r' = r - y$ & $r_k = r - yk$\\
  $x' = x + 0$ & $x_k = x$\\
  $y' = y + 0$ & $y_k = y$\\ \hline
\end{tabular}
\end{center}

We compute the abstract meaning of the outer loop as:
\begin{align*}
  \varphi_{\textsf{outer}}^\lstar &=  \fbox{\parbox{0.825\linewidth}{$\exists k. k \geq 0 \land q' = q + k \land r' = r - ky \land x' = x \land y' = y$}}\\
  &= \formula{q' \geq q \land r' = r - (q-q')y \land x' = x \land y' = y}
\end{align*}

Note the role of compositionality in computing the summary for the outer loop.
Detection of stratified induction variables depends only on having access to a
formula representing a loop body.  \emph{The fact that the outer loop body contains
another loop is completely transparent to the analysis.}

Finally, an input/output formula for the entire program is computed:
\begin{align*}
  \varphi_P &= \parbox{0.9\linewidth}{$\lsem{\tuple{\entry,v_1}\cdot\tuple{v_1,v_2}}\lmul \varphi_{\textsf{outer}}^\lstar \lmul \lsem{\tuple{v_2,v_8} \cdot \tuple{v_8,\exit}}$}\\
  &= \formula{q' \geq 0 \land r' = x - q'y \land r \leq y \land x' = x \land y'=y}
\end{align*}

This formula is strong enough to imply that the assertion at exit (the
standard post-condition for the quotient/remainder computation) correct.  This
is particularly interesting because it requires proving a \emph{non-linear}
loop invariant, which is out of scope for many state-of-the-art program
analyzers (e.g., \cite{ufo,invgen}).

 \section{Preliminaries}

 In this section, we will introduce our program model and review the concept
 of \emph{path expressions} \cite{Tarjan1981,Tarjan1981b}, which is the basis
 of our program analysis framework.


 \paragraph{Program model.}
 A \emph{program} consists of a finite set of procedures $\{P_i\}_{1 \leq i
   \leq N}$.  A \emph{procedure} is a pair $P_i = \langle G_i, LV_i \rangle$,
 where $G_i$ is a flow graph and $LV_i$ is a set of local variables.  A
 \emph{flow graph} is a finite directed graph $G_i = \langle V_i, E_i,
 \entry_i, \exit_i \rangle$ equipped with a distinguished \emph{entry vertex}
 $\entry_i \in V_i$ and \emph{exit vertex} $\exit_i \in V_i$.  We assume that
 every vertex in $V_i$ is reachable from $\entry_i$, $\entry_i$ has no
 incoming edges, $\exit_i$ has no outgoing edges, and that the vertices
 and local variables of each procedure are pairwise disjoint.

 To simplify our discussion, we associate with each edge $e \in E =
 \bigcup_{1 \leq i \leq N}E_i$ an \emph{action} $\act(e) \in \Act$ that gives a
 syntactic representation of the meaning of that edge, where $\Act$ is defined
 as
   \[\Act ::= x \texttt{ := } t \mid \texttt{assume($t$)} \mid \texttt{return} \mid \texttt{call $i$} \text{ where $1 \leq i \leq N$}\]
   \[s,t \in \Exp ::= x \mid n \text{ where $n \in \mathbb{Z}$} \mid s \bullet t \text{ where $\bullet \in \{+,-,*,/\}$}\]
   \[x \in \Var \supseteq \bigcup_{1 \leq i \leq N} LV_i\]
   We will sometimes use \texttt{[$t$]} in place of \texttt{assume($t$)} (as
   in Figure~\ref{fig:example}).  Note that our model has no parameters or
   return values in procedure calls, but these can be modeled using global
   variables.


  Let $G = \langle V, E, \entry, \exit \rangle$ be a flow graph.  For an edge
  $e \in E$, we use $\src(e)$ to denote the source of $e$ and $\tgt(e)$ to
  denote its target.  We define a path $\pi = e_1 \dotsi e_n \in E^*$ to be a
  finite sequence of edges such that for each $i < n$, $\tgt(e_i) =
  \src(e_{i+1})$.  We lift the $\src/\tgt$ notation from edges to paths by
  defining $\src(\pi) = \src(e_1)$ and $\tgt(\pi) = \tgt(e_n)$.  For any $u, v
  \in V$, we define $\paths{u}{v}$ to be the set of paths $\pi$ of $G$ such
  that $\src(\pi) = u$ and $\tgt(\pi) = v$. 

 \paragraph{Path expressions}
 We will assume familiarity with regular expressions.  We use the following
 syntax for the regular expressions $\regexp_\Sigma$ over some alphabet
 $\Sigma$: $p,q$ denote regular expressions, $\epsilon$ denotes the empty
 word, $\emptyset$ denotes the empty language, $p+q$ denotes choice, $p\cdot
 q$ denotes sequencing, and $p^*$ denotes iteration.


 A \emph{path expression} for a flow graph $G = \langle V, E, \entry, \exit
 \rangle$ is a regular expression $p \in \regexp_E$ over the alphabet of edges
 of $G$, such that each word in the language generated by $p$ corresponds to a
 path in $G$ \cite{Tarjan1981,Tarjan1981b}.  For any pair of vertices $u,v$,
 there is a (not necessarily unique) path expression that generates
 $\paths{u}{v}$, the set of paths from $u$ to $v$.  A particular case of
 interest is are path expressions which, for a given vertex $v$, generates the
 set $\paths{\entry}{v}$ of all control flow paths from the entry vertex of
 $G$ to $v$.  Such a path expression can be seen as a succinct representation
 of the set of all computations of $G$ ending at $v$.

  The \emph{(single-source) path expression problem} for a flow graph $G =
  \langle V, E, \entry, \exit \rangle$ is to compute for each vertex $v \in V$
  a path expression $\pathexp{\entry}{v} \in \regexp_E$ representing the set
  of paths $\paths{\entry}{v}$.  An efficient algorithm for solving this
  problem was given in \cite{Tarjan1981}; a more recent path expression
  algorithm appears in \cite{Scholz2007}.  Understanding the specifics of
  these algorithms is not essential to our development, but the idea behind
  both is to divide $G$ into subgraphs, use Gaussian elimination to solve the
  path expression problem within each subgraph, and then combine the
  solutions.  Kleene's classical algorithm for converting a finite automaton
  to a regular expression \cite{Kleene1956} provides another way of solving
  path-expression problems that, while less efficient than
  \cite{Tarjan1981,Scholz2007}, should provide adequate intuition for readers
  not familiar with these algorithms.

 \section{Interpretations} \label{sec:interp}

 In our framework, meaning is assigned to programs by an
 \emph{interpretation}.  We use interpretations to define both program analyses
 and the concrete semantics which justify their correctness.  Interpretations
 are composed of two parts: (i) a semantic \emph{domain} of interpretation, which
 defines a set of possible program meanings and which is equipped with
 operations for composing these meanings,
 and (ii) a \emph{semantic function} that associates a meaning to each atomic
 program action.  In this section, we define interpretations formally.


 We will limit our discussion in this section to the \emph{intraprocedural}
 variant of interpretations.  We will assume that a program consists of a
 single procedure and that the flow graph associated with that procedure has
 no \texttt{call} or \texttt{return} actions.  We will continue to make this
 assumption until Section~\ref{sec:interproc}, where we introduce our
 interprocedural framework.

 Our semantic domains are algebraic structures endowed with the familiar
 regular operators of sequencing, choice, and iteration.  These algebraic
 structures are used in place of the (complete) lattices typically used in the
 iterative theory of abstract interpretation.  We are interested in two
 particular variations: \emph{pre-Kleene algebras} (\PKA{}s) and
 \emph{quantales} \cite{Mulvey1986}.  Pre-Kleene algebras use weaker
 assumptions than the ones for Kleene algebras \cite{Kozen1991}, while
 quantales make stronger ones.  We make use of both variants for three
 reasons. (I) \PKA{}s are desirable because we wish to make our framework as
 broadly applicable as possible, so that it can be used to design and prove
 the correctness of a large class of program analyses.\footnote{In fact, the
   \PKA{} assumptions are weaker than those of any similar algebraic structure
   that have been proposed for program analysis that we are aware of: Kleene
   algebras used in \cite{Bouajjani2003}, the bounded idempotent semirings
   used in weighted pushdown systems \cite{Reps2005}, the \emph{flow algebras}
   of \cite{Filipiuk2011}, and the \emph{left-handed} Kleene algebras of
   \cite{Kot2004}.} (II) Quantales are desirable because we can prove a
 precision theorem (Theorem~\label{thm:intrap_mop}) comparable to the
 classical coincidence theorem of dataflow analysis \cite{Kam1977}. (III) Due
 to a result we will present in Section~\ref{sec:intra_abstract}, it is
 particularly easy to prove the correctness of a program analysis over a
 \PKA{} with respect to a concrete semantics over a quantale (as is the case,
 for example, in our motivating example of linear recurrence analysis).


 We begin by formally defining pre-Kleene algebras:
 \begin{definition}[\PKA{}]
   A \emph{pre-Kleene algebra} ($\PKA{}$) is a 6-tuple $\mathcal{K} = \langle
   K, \plus, \mul, \closure, 0, 1 \rangle$, where $\langle K, \plus, 0
   \rangle$ forms join-semilattice with least element 0 (i.e., $\plus$ is a
   binary operator that is associative, commutative, and idempotent, and which
   has 0 as its unit), $\langle K, \mul, 1 \rangle$ forms a monoid (i.e.,
   $\mul$ is an associative binary operator with unit 1), $\closure$ is a
   unary operator, and such that the following hold:\\
   \begin{minipage}[t]{0.5\textwidth}
   \begin{align*}
     (a \mul b) \plus (a \mul c) &\leq a \mul (b \plus c) & \mul \text{ is left pre-distributive}\\
     (a \mul c) \plus (b \mul c) &\leq (a \plus b) \mul c & \mul \text{ is right pre-distributive}\\
     1 \plus (a \mul a^\closure) &\leq a^\closure & (\emph{I1})\\
     1 \plus (a^\closure \mul a) &\leq a^\closure & (\emph{I2})
   \end{align*}
   \end{minipage}

   \noindent where $\leq$ is the order on the semilattice $\langle K, \plus, 0
   \rangle$.  The priority of the operators is the standard one ($\closure$
   binds tightest, followed by $\mul$ then $\plus$).
 \end{definition}

 As is standard in abstract interpretation, the order on the domain should be
 conceived as an approximation order: $a \leq b$ iff $a$ is approximated by
 $b$ (i.e., if $a$ represents a more precise property than $b$).  Note that
 the pre-distributivity axioms for $\mul$ are equivalent to $\mul$ being
 monotone in both arguments.

\begin{example} \label{ex:lra_domain}
   We return to our motivating example, linear recurrence analysis.  We define
   the semantic algebra of this analysis as:
   \[\mathcal{L} = \tuple{L, \lplus, \lmul, \lstar, 0^\logi, 1^\logi}\]
   The set of transition formulae $L$ and the $\lplus$, $\lmul$, and $\lstar$
   operators are defined as in Section~\ref{sec:motivation}.  The constants
   $0^\logi$ and $1^\logi$ are defined as\\
   \noindent\hspace*{1cm}$0^\logi = \formula{\false}$ \hfill $1^\logi = \formula{\bigwedge_{x \in \Var} x = x'}$\hspace*{1cm}

   It is easy to check that $\mathcal{L}$ forms a \PKA{}.  We will show only
   that (I1) holds as an example.  Let $a \in L$.  Recalling the definition of
   $\lstar$ from Section~\ref{sec:motivation}, we have
   \begin{equation*}
     \phi^\lstar = \formula{\exists k. k \geq 0 \land \bigwedge_{x \in
         I(\phi)} x' = cl_x}
   \end{equation*}
   where for each induction variable $x \in I(\phi)$, $cl_x$ is a closed form
   for an affine recurrence $x' = x + f(\vec{y})$ such that $\phi \models x' =
   x + f(\vec{y})$.

   It is easy to see that $1^\logi \leq \phi^\star$ (since $\leq$ is logical
   implication in this algebra and for all $x$, $cl_x[0/k] = x$).  It remains
   to show that $\phi \lmul \phi^\lstar \leq \phi^\lstar$.  Noting that
   $\lmul$ is monotone and (by the definition of $I$ and $cl_x$) $\phi \leq
   \formula{\bigwedge_{x \in I(\phi)} x' = x + f(\vec{y})}$, we have
   \begin{align*}
     a \lmul a^\lstar &\leq \formula{\bigwedge_{x \in I(\phi)} x' = x + f(\vec{y})} \lmul a^\lstar\\
     &=\formula{\parbox{0.625\linewidth}{$\exists_{x \in I(\phi)} x''. \big(\bigwedge_{x \in I(\phi)} x'' = x + f(\vec{y})\big)$ \\$\land \big(\exists k. k \geq 0 \land \bigwedge_{x \in I(\phi)} x' = cl_x[z''/z]\big)$}}\\
     &= \formula{\exists k. k \geq 0 \land \bigwedge_{x \in I(\phi)} x' = (cl_x[z + f(\vec{z})/z])}\\
     &= \formula{\exists k. k \geq 1 \land \bigwedge_{x \in I(\phi)} x' = cl_x}\\
     &\leq \phi^\lstar
   \end{align*}
\end{example}

 Quantales are a strengthening of the axioms of Kleene algebra that are
 particularly useful for presenting concrete semantics.    Intuitively, we may think of
 quantales as Kleene algebras where the choice and iteration operators are
 ``precise.''

 \begin{definition}[Quantale]
   A \emph{quantale} $\mathcal{K} = \langle K, \plus, \mul, \closure, 0, 1
   \rangle$ is a \PKA{} such that $\langle K, \plus, 0\rangle$ forms a
   complete lattice, and for any $a \in K$, $a^\closure = \bigoplus_{n \in
     \mathbb{N}} a^n$, and such that $\mul$ distributes over arbitrary sums:
   that is, for any index set $I$, we have
   \begin{align*}
     a\mul\big(\bigoplus_{i \in I(\phi)} b_i\big) &= \bigoplus_{i \in I} (a\mul
     b_i) \\
     \big(\bigoplus_{i \in I} a_i\big)\mul b &= \bigoplus_{i \in I} (a_i\mul b)
   \end{align*}
 \end{definition}

 \begin{example} \label{ex:rel_domain}
   For any set $A$, the set of binary relations on $A$ forms a quantale.  A
   case of particular interest is binary relations over the set $\Env = \Var
   \rightarrow \mathbb{Z}$ of program environments.  This quantale is the
   semantic algebra of the concrete semantics which justifies the correctness
   of linear recurrence analysis (i.e., it fills the typical role of state
   collecting semantics in the iterative abstract interpretation framework).
   Formally, this quantale is defined as:
   \[ \mathcal{R} = \tuple{\powerset{\Env \times \Env}, \relplus, \relmul, \relstar, 0^\relational, 1^\relational} \]
 where

   \hspace*{0pt}\hfill$R\relmul S = \{ \langle \rho, \rho'' \rangle : \exists \rho'. \langle \rho, \rho' \rangle \in R \land \langle \rho', \rho'' \rangle \in S \}$\hfill\hspace*{0pt}

 \begin{minipage}[t]{4cm}
   \begin{align*}
     0^\relational &= \emptyset\\
     1^\relational &= \{ \langle \rho, \rho \rangle : \rho \in \Env \}     
   \end{align*}
 \end{minipage}
 \begin{minipage}[t]{4cm}
   \begin{align*}
     R \relplus S &= R \cup S\\
     R^\relstar &= \bigcup_{n \in \mathbb{N}} R^n
   \end{align*}
 \end{minipage}
 \end{example}
 

 We note that $\mathcal{L}$, the semantic algebra of linear recurrence
 analysis defined in Example~\ref{ex:lra_domain}, is \emph{not} an example of
 a quantale.  In particular, it does not form a complete lattice and its
 iteration operator does not satisfy the condition that for all $a$, $a^\lstar
 = \bigoplus^\logi_{n \in \mathbb{N}} a^n$.  It is also interesting to note
 that $\mathcal{L}$ does not meet the assumptions of \emph{any} of the
 Kleene-algebra-like structures that have been previously proposed for use in
 program analysis \cite{Bouajjani2003,Reps2005,Filipiuk2011,Kot2004}.


 Finally, we may formally define our notion of interpretation, which will be
 used to define both concrete semantics and program analyses in our framework.

 \begin{definition}[Interpretation] \label{def:interpretation}
   An \emph{interpretation} is a pair $\mathscr{I} = \langle \mathcal{D},
   \sem{\cdot} \rangle$, where $\mathcal{D}$ is a set equipped with the
   regular operations (e.g., a pre-Kleene algebra or a quantale) and
   $\sem{\cdot} : E \rightarrow \mathcal{D}$.  We will call $\mathcal{D}$ the
   \emph{domain} of the interpretation and $\sem{\cdot}$ the \emph{semantic
     function}.
 \end{definition}

 Interpretations assign meanings to programs by ``evaluating'' path
 expressions within the interpretation.  For any interpretation $\mathscr{I} =
 \langle \langle D, \plus, \mul, \closure, 0, 1 \rangle, \sem{\cdot} \rangle$
 and any path expression $p$, we use $\mathscr{I}[p]$ to denote the
 interpretation of $p$ within the domain $D$, which can be defined recursively
 as:\\
 \begin{minipage}{4cm}
 \begin{align*}
   \mathscr{I}[e] &= \sem{e}\\
   \mathscr{I}[\emptyset] &= 0\\
   \mathscr{I}[\epsilon] &= 1
 \end{align*}   
 \end{minipage}
 \begin{minipage}{4cm}
 \begin{align*}
   \mathscr{I}[p+q] &= \mathscr{I}[p] \plus \mathscr{I}[q]\\
   \mathscr{I}[p\cdot q] &= \mathscr{I}[p] \mul \mathscr{I}[q]\\
   \mathscr{I}[p^*] &= \mathscr{I}[p]^\star
 \end{align*}
 \end{minipage}

 A first example of an interpretation is the one defining linear
 recurrence analysis, obtained by pairing the \PKA{} defined in
 Example~\ref{ex:lra_domain} with the semantic function defined in
 Section~\ref{sec:motivation}.  We finish this section with a second example,
 the relational interpretation, which we will use in the next section to
 justify the correctness of linear recurrence analysis in Section~\ref{sec:intra_abstract}.
\begin{example} \label{ex:rel_interp}
  In the relational interpretation, the meaning of a program is its
  input/output relation.  That is, the meaning of a program is the set of all
  $\tuple{\rho, \rho'} \in \Env \times \Env$ such that there is an execution
  of the program that, starting from environment $\rho$, terminates in
  environment $\rho'$.  Formally, the relational interpretation $\mathscr{R}$
  is defined as:
 \[ \mathscr{R} = \tuple{\mathcal{R}, \sem{\cdot}^\relational} \]

 \noindent where $\mathcal{R}$ is as in Example~\ref{ex:rel_domain} the semantic function is defined by
 \begin{equation*}
   \sem{e}^\relational =
   \begin{cases}
     \{ \langle \rho, \rho[x \gets \mathcal{E}\sem{t}(\rho)] \rangle : \rho \in \Env \} & \text{if } \act(e) = x \texttt{ := } t\\
     \{ \langle \rho, \rho \rangle : \rho \in \Env, \rho \models t \neq 0\} & \text{if } \act(e) = \texttt{[}t\texttt{]}
   \end{cases}
 \end{equation*}
 (using $\mathcal{E}\sem{t}(\rho)$ to denote the integer to which the
 expression $t$ evaluates in environment $\rho$)
\end{example}

\subsection{Correctness and precision} \label{sec:intra_correct}

 We now justify the correctness of using path-expression algorithms for
 program analysis.  Given a control flow graph and a vertex $v$ of the graph,
 a path expression $\pathexp{\entry}{v}$ is a regular expression representing
 the set of paths from $\entry$ to $v$.  A reasonable assumption is that the
 interpretation of $\pathexp{\entry}{v}$ approximates the interpretation of
 each of these paths.  The following theorem states that this is the case, if
 the semantic algebra of the interpretation is a \PKA{}.
 \begin{theorem}[Correctness] \label{thm:intrap_mop}
   Let $\mathscr{I} = \langle \mathcal{D}, \sem{\cdot} \rangle$ be an
   interpretation where $\mathcal{D}$ is a \PKA{}, $G = \langle V, E, \entry,
   \exit \rangle$ be a flow graph, and $v \in V$ be a vertex of $G$.  Then for
   any path $e_1 \dotsi e_n \in \paths{\entry}{v}$, we have
   \[ \sem{e_1} \mul \dotsi \mul \sem{e_n} \leq \mathscr{I}[\pathexp{\entry}{v}] \]
 \end{theorem}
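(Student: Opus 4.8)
The plan is to prove a more general statement by structural induction on regular expressions and then read off the theorem as a special case. Concretely, I would establish the following lemma: for every regular expression $p \in \regexp_E$ and every word $e_1 \dotsi e_n$ in the language $L(p)$ generated by $p$, we have $\sem{e_1} \mul \dotsi \mul \sem{e_n} \leq \mathscr{I}[p]$, with the convention that the empty word (the case $n = 0$) contributes the empty product $1$. Since $\pathexp{\entry}{v}$ generates exactly $\paths{\entry}{v}$, every path $e_1 \dotsi e_n \in \paths{\entry}{v}$ is a word of $L(\pathexp{\entry}{v})$, so instantiating the lemma at $p = \pathexp{\entry}{v}$ yields the theorem immediately. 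Because $\mul$ is associative (as $\langle K, \mul, 1\rangle$ is a monoid), the product $\sem{e_1}\mul\dotsi\mul\sem{e_n}$ is unambiguous, so I am free to regroup it however the induction demands.

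The base cases are direct. For $p = e$ a single edge, the only word is $e$ and $\mathscr{I}[e] = \sem{e}$, so the claim is reflexivity of $\leq$; for $p = \emptyset$ the language is empty and the statement is vacuous; for $p = \epsilon$ the only word is empty, its product is $1 = \mathscr{I}[\epsilon]$, and again we invoke reflexivity. For the choice case $p = q + r$, any word $w \in L(q+r)$ lies in $L(q)$ or in $L(r)$; in the first case the induction hypothesis gives $(\text{product of } w) \leq \mathscr{I}[q] \leq \mathscr{I}[q] \plus \mathscr{I}[r] = \mathscr{I}[q+r]$, using $a \leq a \plus b$ in the join-semilattice, and symmetrically in the second. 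For the sequencing case $p = q \cdot r$, a word $w \in L(q\cdot r)$ splits as $w = uv$ with $u \in L(q)$ and $v \in L(r)$; the induction hypothesis bounds the product of $u$ by $\mathscr{I}[q]$ and the product of $v$ by $\mathscr{I}[r]$, and monotonicity of $\mul$ (which the excerpt notes is equivalent to the two pre-distributivity axioms) lets me combine these into $(\text{product of } w) \leq \mathscr{I}[q] \mul \mathscr{I}[r] = \mathscr{I}[q\cdot r]$.

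The iteration case $p = q^\closure$ is where the real work lies, and I expect it to be the main obstacle. Writing $a = \mathscr{I}[q]$, a word $w \in L(q^\closure)$ factors as $w = u_1 \dotsi u_m$ with each $u_i \in L(q)$ for some $m \geq 0$; by the induction hypothesis and repeated use of monotonicity of $\mul$ in both arguments, the product of $w$ is bounded by $a^m$, the $m$-fold product of $a$ (with $a^0 = 1$). It therefore suffices to prove the auxiliary fact that $a^m \leq a^\closure$ for all $m \geq 0$, which I would establish by a secondary induction on $m$ using the pre-Kleene iteration axioms. The base case $a^0 = 1 \leq a^\closure$ follows from \emph{(I1)}, since $1 \leq 1 \plus (a \mul a^\closure) \leq a^\closure$. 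For the step, assuming $a^m \leq a^\closure$, monotonicity of $\mul$ gives $a^{m+1} = a \mul a^m \leq a \mul a^\closure$, and then \emph{(I1)} yields $a \mul a^\closure \leq a^\closure$, so $a^{m+1} \leq a^\closure$ by transitivity. This closes both inductions and hence the lemma; the only \PKA{} ingredients required are associativity and monotonicity of $\mul$, the semilattice order, and axiom \emph{(I1)} (axiom \emph{(I2)} is not needed for this direction).
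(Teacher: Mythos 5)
Your proposal is correct and follows essentially the same route as the paper's own proof: a structural induction over the path expression, with all cases handled identically and the iteration case discharged by an auxiliary induction showing $a^m \leq a^\closure$ for all $m$ (which the paper factors out as its star-induction lemma). The only cosmetic difference is that in the inductive step of that auxiliary fact you peel the factor off the left and use axiom \emph{(I1)}, whereas the paper peels it off the right and uses \emph{(I2)}; both are equally valid.
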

 \begin{proof}
   We prove that for any word $w = e_1 \dotsi e_n$ that is recognized by
   $\pathexp{\entry}{v}$ and any path expression $p$ that recognizes $w$, we
   have
   \[ \sem{w} \leq \mathscr{I}[p] \]
   (where $\sem{w}$ denotes $\sem{e_1} \mul \dotsi \mul \sem{e_n}$) by
   induction on $p$.  The main result then follows since the language
   recognized by $\pathexp{\entry}{v}$ is exactly $\paths{\entry}{v}$.

   \vspace{0.25cm}

   \noindent Case $p = \emptyset$: contradiction, since $w$ is recognized by $p$

   \vspace{0.1cm}
   \noindent Case $p = \epsilon$:  then $n=0$, and $\mathscr{I}[p] = 1 = \sem{w}$

   \vspace{0.1cm}
   \noindent Case $p = e$ for some edge $e$:  then $n=1$, $e_1 = e$, and
   $\mathscr{I}[p] = \sem{e} = \sem{w}$

   \vspace{0.1cm}
   \noindent Case $p = q + r$: wlog, we may assume that $w$ is accepted by
   $q$.  By the induction hypothesis, $\sem{w} \leq \mathscr{I}[q]$, whence
   \[\sem{w} \leq \mathscr{I}[q] \plus \mathscr{I}[r] = \mathscr{I}[q+r]\]

   \vspace{0.1cm}
   \noindent Case $p = q \cdot r$: then there exists $w_1,w_2$ such that
   $w=w_1w_2$, $w_1$ is recognized by $q$, and $w_2$ is recognized by $r$.  By
   the induction hypothesis, $\sem{w_1} \leq \mathscr{I}[q]$ and $\sem{w_2}
   \leq \mathscr{I}[r]$.  By monotonicity of $\mul$, \[\sem{w} = \sem{w_1} \mul
   \sem{w_2} \leq \mathscr{I}[q] \mul \mathscr{I}[r]\]

   \vspace{0.1cm}
   \noindent Case $p = q^\closure$: Since $w$ is recognized by $q^\closure$,
   there exists some $n$ such that $w = w_1 \dotsi w_n$ and each $w_i$ is
   recognized by $q$.  By the induction hypothesis, $\sem{w_i} \leq
   \mathscr{I}[q]$ for each $i$.  It follows that
   \[\sem{w} = \sem{w_1} \mul \dotsi \mul \sem{w_n} \leq
   \mathscr{I}[p]^n\] Finally, we have $\mathscr{I}[p]^n \leq
   \mathscr{I}[p]^\closure = \mathscr{I}[p^\closure]$ by
   Lemma~\ref{lem:star_induction}.
 \end{proof}

 The preceding theorem is analogous to the classical result which was used to
 justify the use of chaotic iteration algorithms for data flow analysis: the
 fixed-point solution to an analysis approximates the merge-over-paths
 solution \cite{Kildall1973}.  Our framework also accommodates a precision
 theorem, which is analogous to the coincidence theorem from dataflow analysis
 \cite{Kam1977}.  This result states that any path-expression algorithm
 computes exactly the least upper bound of the interpretation of the paths in
 the case that the semantic algebra of the interpretation is a quantale.
 \begin{theorem}[Precision] \label{thm:intrap_mop}
   Let $\mathscr{I} = \langle \mathcal{D}, \sem{\cdot} \rangle$ be an
   interpretation where $\mathcal{D}$ is a quantale, $G = \langle V, E,
   \entry, \exit \rangle$ be a flow graph, and $v \in V$ be a vertex of $G$.
   Then
   \[ \bigoplus_{\pi \in \paths{\entry}{v}} \sem{\pi} = \mathscr{I}[\pathexp{\entry}{v}] \]
     where for a path $\pi = e_1\dotsi e_n$, we define $\sem{\pi} = \sem{e_1}
     \mul \dotsi \mul \sem{e_n}$.
 \end{theorem}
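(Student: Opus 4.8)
The plan is to reduce the statement to a single lemma about regular expressions: for every path expression $p$,
\[ \mathscr{I}[p] = \bigoplus_{w} \sem{w}, \]
where $w$ ranges over the words in the language recognized by $p$ and $\sem{w} = \sem{e_1} \mul \dotsi \mul \sem{e_n}$ for $w = e_1 \dotsi e_n$. Since the path expression $\pathexp{\entry}{v}$ recognizes exactly $\paths{\entry}{v}$, the theorem is simply the instance $p = \pathexp{\entry}{v}$. I would first observe that every join in sight is well-defined: because $\mathcal{D}$ is a quantale, $\langle K, \plus, 0 \rangle$ is a complete lattice, so the (possibly infinite) join over a language exists. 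I would also note that the Correctness Theorem already supplies one inequality for free (a quantale is in particular a \PKA{}), namely $\bigoplus_{\pi} \sem{\pi} \leq \mathscr{I}[\pathexp{\entry}{v}]$, so the real content of the lemma is the reverse direction, which is exactly where the two extra quantale axioms are consumed.

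I would prove the lemma by structural induction on $p$. The base cases are immediate: $\mathscr{I}[\emptyset] = 0$ is the empty join, $\mathscr{I}[\epsilon] = 1 = \sem{\epsilon}$ is the join over the singleton language $\{\epsilon\}$, and $\mathscr{I}[e] = \sem{e}$. For $p = q + r$ the recognized language is the union of those of $q$ and $r$, and the join over a union equals the join of the two joins (using idempotence of $\plus$ to absorb any overlap), so the case follows from $\mathscr{I}[q+r] = \mathscr{I}[q] \plus \mathscr{I}[r]$ and the induction hypothesis.

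The sequencing case is where the quantale hypothesis first becomes essential. Here $\mathscr{I}[q \cdot r] = \mathscr{I}[q] \mul \mathscr{I}[r]$, and expanding both factors by the induction hypothesis yields a product of two joins. I would apply the infinite distributivity of $\mul$ over arbitrary sums (available in a quantale but not a \PKA{}) to rewrite this as $\bigoplus_{w_1} \bigoplus_{w_2} (\sem{w_1} \mul \sem{w_2})$, ranging over words $w_1$ of $q$ and $w_2$ of $r$. Since $\sem{w_1} \mul \sem{w_2} = \sem{w_1 w_2}$ by associativity of $\mul$, and distinct pairs producing the same concatenation collapse by idempotence of $\plus$, this is precisely $\bigoplus_{w} \sem{w}$ over the concatenation language.

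The iteration case is the main obstacle, and the reason the theorem fails for general \PKA{}s. I would use the quantale identity $\mathscr{I}[q]^\closure = \bigoplus_{n \in \mathbb{N}} \mathscr{I}[q]^n$ to reduce to the powers $\mathscr{I}[q]^n$; an inner induction on $n$, with the sequencing case as its step, shows $\mathscr{I}[q]^n = \bigoplus_{w} \sem{w}$ over the $n$-fold concatenation language of $q$. Taking the join over all $n$ and reassociating the two joins then gives the join over $\bigcup_n (\text{$n$-fold concatenation})$, which is exactly the language of $q^*$. The delicate points — and the places I would be most careful — are the manipulations of infinite joins (reassociating, reindexing, and absorbing duplicates), all justified by completeness of the lattice, infinite distributivity, and idempotence of $\plus$: precisely the strengthenings that separate quantales from \PKA{}s. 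Instantiating the finished lemma at $p = \pathexp{\entry}{v}$ then yields the claimed equality.
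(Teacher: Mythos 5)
Your proof is correct. It is worth noting that the paper never actually supplies a proof of this intraprocedural precision theorem: the only related proof it contains is the interprocedural Coincidence theorem in the appendix, which takes a different route. There, the authors build a ``word'' interpretation $\mathscr{W}$ whose universe consists of sets of (quantified) words, show that any quantale interpretation factors through $\mathscr{W}$ via the map $h(P) = \bigoplus \{\tilde{h}(w) : w \in P\}$, and obtain the join-over-paths characterization by pushing the join through $h$. Your structural induction on $p$, establishing $\mathscr{I}[p] = \bigoplus_{w} \sem{w}$ over the words $w$ recognized by $p$, is essentially the inlined form of that argument: verifying that $h$ is a homomorphism requires exactly the three computations you perform case by case (join over a union of languages, infinite distributivity of $\mul$ over the two joins in the concatenation case, and the identity $a^\closure = \bigoplus_{n} a^n$ together with an inner induction for iteration), so the mathematical content coincides. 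What your direct induction buys is a self-contained, elementary proof of the intraprocedural statement, with the role of each quantale axiom made explicit; what the paper's factoring construction buys is reuse in the interprocedural setting, where $\mathscr{I}\tuple{v}$ is defined through fixpoint summary computation and quantification operators, so a plain induction over a single regular expression no longer applies. Two of your side remarks are also accurate and worth keeping: the observation that the Correctness theorem already yields the inequality $\bigoplus_{\pi} \sem{\pi} \leq \mathscr{I}[\pathexp{\entry}{v}]$ (so only the reverse inequality genuinely needs the quantale axioms), and the care taken that the induction ranges over arbitrary words (which need not be paths of $G$), for which $\sem{\cdot}$ is still well-defined as a product of edge interpretations.
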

Our precision theorem implies that for any path expressions $p$ and $p'$ that
recognize the same set of paths, and any interpretation $\mathscr{I}$ over a
quantale, we have $\mathscr{I}[p] = \mathscr{I}[p']$.  This is important from
the algorithmic perspective since it implies that the result of an analysis
over a quantale is the same regardless of how path expressions are computed
(e.g., we may use any of the algorithms appearing in
\cite{Kleene1956,Tarjan1981,Scholz2007}).  While this ``algorithm
irrelevance'' property would be undesirable in general,\footnote{Algorithm
  irrelevance would imply (among other things) that loop unrolling cannot
  increase analysis precision, since unrolling preserves the set of recognized
  paths.} it would be un-intuitive if the concrete semantics of a program
depended on such a detail.

\section{Abstraction} \label{sec:intra_abstract}

 We now develop a theory of abstract interpretation for (intraprocedural)
 compositional program analyses in our algebraic framework.  There are a
 vast number of ways to approach the problem of establishing an approximation
 relationship between one interpretation and another \cite{Cousot1992}.  This
 section presents an adaptation of one such method, soundness relations, to
 our framework.


 
 Given two \PKA{}s $\mathcal{C}$ and $\mathcal{A}$, a soundness relation is a
 relation $\soundrel \subseteq \mathcal{C} \times \mathcal{A}$ that defines
 which concrete properties (elements of $\mathcal{C}$) are approximated by
 which abstract properties (elements of $\mathcal{A}$).  That is,
 $\sound{c}{a}$ indicates that $a \in \mathcal{A}$ is a sound approximation of
 $c \in \mathcal{C}$.  We require that soundness relationships are
 congruent with respect to the regular operations; for example, if $a \in
 \mathcal{A}$ approximates $c \in \mathcal{C}$, then
 $a^{\closure_{\mathcal{A}}}$ should approximate $c^{\closure_{\mathcal{C}}}$.
 Formally, we define soundness relations as follows:

 \begin{definition}[Soundness relation] \label{def:sr} Given two \PKA{}s
   $\mathcal{C}$ and $\mathcal{A}$, $\soundrel \subseteq \mathcal{C} \times
   \mathcal{A}$ is a \emph{soundness relation} if
   $\sound{0_{\mathcal{C}}}{0_{\mathcal{A}}}$,
   $\sound{1_{\mathcal{C}}}{1_{\mathcal{A}}}$, and for all $\sound{c}{a}$
   and $\sound{c'}{a'}$ we have the following:
   
   \noindent \hspace*{0.5cm}$\sound{c \plus_{\mathcal{C}} c'}{a \plus_{\mathcal{A}} a'}$ \hfill
    $\sound{c \mul_{\mathcal{C}} c'}{a \mul_{\mathcal{A}} a'}$ \hfill
    $\sound{c^{\closure_{\mathcal{C}}}}{a^{\closure_{\mathcal{A}}}}$ \hspace*{0.25cm}

 \end{definition}


 Algebraically speaking, $\soundrel \subseteq \mathcal{C} \times \mathcal{A}$ is
 a soundness relation iff it is a subalgebra of the direct product \PKA{}
 $\mathcal{C} \times \mathcal{A}$.

 The main result on soundness relations is the following abstraction
 theorem, which states that if we can exhibit a soundness relation $\soundrel$
 between a concrete interpretation $\mathscr{C}$ and an abstract
 interpretation $\mathscr{A}$, then the interpretation of any path expression
 in $\mathscr{C}$ is approximated by its interpretation in $\mathscr{A}$
 according to $\soundrel$.  From a program analysis perspective, we can think
 of the conclusion of this theorem as \emph{the program analysis defined by
   $\mathscr{A}$ is correct with respect to the concrete semantics defined by
   $\mathscr{C}$}.

 \begin{theorem}[Abstraction] \label{thm:sound_abstract}
   Let $\mathscr{C}$ and $\mathscr{A}$ be interpretations over domains
   $\mathcal{C}$ and $\mathcal{A}$, let $\soundrel \subseteq \mathcal{C}
   \times \mathcal{A}$ be a soundness relation, and let $p \in \regexp_E$ be a path
   expression.  If for all $e \in E$, $\sound{\natsem{e}}{\abssem{e}}$, then $\sound{\mathscr{C}[p]}{\mathscr{A}[p]}$. Moreover, this result holds even if $\mathcal{C}$ and
   $\mathcal{A}$ do not satisfy \emph{any} of the \PKA{} axioms.
 \end{theorem}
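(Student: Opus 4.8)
The plan is to proceed by structural induction on the path expression $p$, matching each case of the recursive definition of $\mathscr{I}[\cdot]$ against the corresponding clause in the definition of a soundness relation. The induction hypothesis is that the claim $\sound{\mathscr{C}[q]}{\mathscr{A}[q]}$ holds for every proper subexpression $q$ of $p$.

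For the base cases, I would argue as follows. When $p = e$ for some edge $e \in E$, the definition of $\mathscr{I}[\cdot]$ gives $\mathscr{C}[e] = \natsem{e}$ and $\mathscr{A}[e] = \abssem{e}$, so the hypothesis $\sound{\natsem{e}}{\abssem{e}}$ is exactly what is required. When $p = \emptyset$, we have $\mathscr{C}[\emptyset] = 0_{\mathcal{C}}$ and $\mathscr{A}[\emptyset] = 0_{\mathcal{A}}$, and $\sound{0_{\mathcal{C}}}{0_{\mathcal{A}}}$ holds by the definition of a soundness relation; the case $p = \epsilon$ is identical with $1$ in place of $0$.

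For the inductive cases, each of the three regular operators is handled by a single appeal to the induction hypothesis followed by the matching congruence clause of Definition~\ref{def:sr}. If $p = q + r$, then by the induction hypothesis $\sound{\mathscr{C}[q]}{\mathscr{A}[q]}$ and $\sound{\mathscr{C}[r]}{\mathscr{A}[r]}$, whence the closure of $\soundrel$ under $\plus$ yields $\sound{\mathscr{C}[q] \natplus \mathscr{C}[r]}{\mathscr{A}[q] \absplus \mathscr{A}[r]}$, which is precisely $\sound{\mathscr{C}[p]}{\mathscr{A}[p]}$. The case $p = q \cdot r$ is the same argument using closure under $\mul$, and $p = q^*$ uses closure under $\closure$ applied to the single hypothesis $\sound{\mathscr{C}[q]}{\mathscr{A}[q]}$.

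I do not expect any genuine obstacle here: the statement is an exact structural match between the recursion defining $\mathscr{I}[\cdot]$ and the congruence conditions defining $\soundrel$, so the induction goes through mechanically. The one point worth drawing out is the justification of the ``moreover'' clause. The proof inspects only (i) the recursive equations defining $\mathscr{C}[p]$ and $\mathscr{A}[p]$, which require merely that the operators $\plus, \mul, \closure, 0, 1$ be \emph{defined} on each domain, and (ii) the closure conditions on $\soundrel$, which are conditions on the \emph{relation} rather than on the ambient structures. At no stage is any \PKA{} axiom (pre-distributivity, associativity, idempotence, or the unfolding inequalities I1/I2) invoked. Consequently the argument is purely equational and remains valid whenever $\mathcal{C}$ and $\mathcal{A}$ are sets carrying operators of the right arities together with a relation satisfying the closure conditions, regardless of whether those operators satisfy the \PKA{} laws.
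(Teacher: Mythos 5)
Your proof is correct and follows exactly the route the paper intends: the paper's own proof is just the one-line remark ``Straightforward, by induction on $p$,'' and your structural induction (base cases matched to $\sound{0_{\mathcal{C}}}{0_{\mathcal{A}}}$, $\sound{1_{\mathcal{C}}}{1_{\mathcal{A}}}$, and the edge hypothesis; inductive cases matched to the congruence clauses of Definition~\ref{def:sr}) is precisely that induction spelled out. Your observation that no \PKA{} axiom is ever used cleanly justifies the ``moreover'' clause as well.
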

 \begin{proof}
   Straightforward, by induction on $p$.
 \end{proof}

 A common special case of the soundness relation framework is when the
 concrete domain is a quantale and the abstract domain is a \PKA{} (e.g., this
 is the case when proving the correctness of linear recurrence analysis
 w.r.t. the relational interpretation).  Under this assumption, the following
 proposition establishes a set of conditions that imply that a given relation
 is a soundness relation, but which may be easier to prove than using
 Definition~\ref{def:sr} directly.  In particular, this proposition relieves
 the analysis designer from having to prove congruence of the iteration
 operator.
 \begin{proposition} \label{prop:ka_sr}
   Let $\mathcal{C}$ be a quantale, $\mathcal{A}$ be a \PKA{}, and let
   $\soundrel \subseteq \mathcal{C} \times \mathcal{A}$ be a relation
   satisfying the following:
   \begin{enumerate}
   \item $\sound{1_\mathcal{C}}{1_{\mathcal{A}}}$
   \item If $\sound{c}{a}$ and $\sound{c'}{a'}$, then $\sound{c
     \mul_{\mathcal{C}} c'}{a \mul_{\mathcal{A}} a'}$
   \item If $\sound{c}{a}$ and $a \leq_{\mathcal{A}} a'$, then
     $\sound{c}{a'}$.
   \item For all $S \subseteq \mathcal{C}$, if for all $c \in S$,
     $\sound{c}{a}$, then $\sound{\bigoplus_{\mathcal{C}} S}{a}$.
   \end{enumerate}
   Then $\soundrel$ is a soundness relation.
 \end{proposition}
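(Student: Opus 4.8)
The goal is to show that the four conditions imply $\soundrel$ is a soundness relation in the sense of Definition~\ref{def:sr}. That definition requires five things: $\sound{0_\mathcal{C}}{0_\mathcal{A}}$, $\sound{1_\mathcal{C}}{1_\mathcal{A}}$, and congruence with respect to $\plus$, $\mul$, and $\closure$. Conditions (1) and (2) hand us $\sound{1_\mathcal{C}}{1_\mathcal{A}}$ and multiplicative congruence directly, so the real work is to derive the remaining three — the zero case, additive congruence, and (the hard one) multiplicative... I mean iterative congruence — from conditions (3) (upward closure in the abstract argument) and (4) (closure under arbitrary concrete joins).

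Let me sketch the key steps.

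**The easy derived cases.** The plan is to first dispatch $\sound{0_\mathcal{C}}{0_\mathcal{A}}$ and additive congruence using condition (4), exploiting the fact that $\mathcal{C}$ is a quantale so that $0_\mathcal{C} = \bigoplus_\mathcal{C} \emptyset$ and finite joins are instances of arbitrary joins. For the zero case, I would apply condition (4) to the empty set $S = \emptyset$: vacuously every $c \in \emptyset$ satisfies $\sound{c}{0_\mathcal{A}}$, so (4) yields $\sound{\bigoplus_\mathcal{C}\emptyset}{0_\mathcal{A}}$, and $\bigoplus_\mathcal{C}\emptyset = 0_\mathcal{C}$. For additive congruence, given $\sound{c}{a}$ and $\sound{c'}{a'}$, I would use condition (3) to weaken both to the common abstract upper bound $a \plus_\mathcal{A} a'$, obtaining $\sound{c}{a \plus_\mathcal{A} a'}$ and $\sound{c'}{a \plus_\mathcal{A} a'}$; then condition (4) applied to $S = \{c, c'\}$ gives $\sound{c \plus_\mathcal{C} c'}{a \plus_\mathcal{A} a'}$.

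**The main obstacle: iterative congruence.** The crux is establishing $\sound{c^{\closure_\mathcal{C}}}{a^{\closure_\mathcal{A}}}$ from $\sound{c}{a}$; this is exactly the step that Proposition~\ref{prop:ka_sr} advertises as relieving the analysis designer of proving congruence of $\closure$ by hand. Because $\mathcal{C}$ is a quantale, we have the explicit formula $c^{\closure_\mathcal{C}} = \bigoplus_{n \in \mathbb{N}} c^n$. My plan is to prove, by induction on $n$, that $\sound{c^n}{a^{\closure_\mathcal{A}}}$ for every $n \in \mathbb{N}$, and then discharge the whole join via condition (4). For the induction: the base case $n=0$ is $\sound{1_\mathcal{C}}{a^{\closure_\mathcal{A}}}$, which follows from condition (1) and condition (3), since $1_\mathcal{A} \leq_\mathcal{A} a^{\closure_\mathcal{A}}$ by the \PKA{} axiom (I1). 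The inductive step assumes $\sound{c^n}{a^{\closure_\mathcal{A}}}$ and seeks $\sound{c^{n+1}}{a^{\closure_\mathcal{A}}}$: apply condition (2) to $\sound{c}{a}$ and the hypothesis to get $\sound{c \mul_\mathcal{C} c^n}{a \mul_\mathcal{A} a^{\closure_\mathcal{A}}}$, then weaken the abstract side with condition (3) using $a \mul_\mathcal{A} a^{\closure_\mathcal{A}} \leq_\mathcal{A} a^{\closure_\mathcal{A}}$, which again holds by axiom (I1). Having shown $\sound{c^n}{a^{\closure_\mathcal{A}}}$ for all $n$, condition (4) applied to $S = \{c^n : n \in \mathbb{N}\}$ delivers $\sound{\bigoplus_{n \in \mathbb{N}} c^n}{a^{\closure_\mathcal{A}}}$, which is precisely $\sound{c^{\closure_\mathcal{C}}}{a^{\closure_\mathcal{A}}}$.

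The delicate point to get right is that every appeal to a \PKA{} inequality ($1_\mathcal{A} \leq a^{\closure_\mathcal{A}}$ and $a \mul_\mathcal{A} a^{\closure_\mathcal{A}} \leq a^{\closure_\mathcal{A}}$) is justified by the abstract-side axioms (I1)/(I2) together with the monotonicity of $\plus_\mathcal{A}$, so that condition (3) can legitimately carry the approximation up to $a^{\closure_\mathcal{A}}$; and that the quantale structure of $\mathcal{C}$ is used in exactly two places — the identity $c^{\closure_\mathcal{C}} = \bigoplus_{n} c^n$ and the availability of arbitrary joins for condition (4). I expect the iterative step to be the only part requiring genuine care; the zero and additive cases are immediate once the empty-join and pairwise-join instances of condition (4) are noticed.
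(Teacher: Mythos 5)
Your proof is correct and takes essentially the same route as the paper's: both exploit the quantale identity $c^{\closure_{\mathcal{C}}} = \bigoplus_{n} c^n$, an induction on $n$ controlled by the \PKA{} axiom (I1)/(I2), and then conditions (3) and (4) to weaken to $a^{\closure_{\mathcal{A}}}$ and pass to the arbitrary join. The only differences are cosmetic: the paper first proves $\sound{c^n}{a^n}$ and isolates $a^n \leq a^{\closure_{\mathcal{A}}}$ as a separate star-induction lemma (which it reuses elsewhere), whereas you fuse that weakening into the induction itself; and you spell out the $0$ and $\plus$ cases that the paper dismisses as straightforward.
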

 \begin{proof}
   Congruence of $0$ and $\plus$ are straightforward, so we just prove
   congruence of iteration.  First, we prove a lemma:

   \begin{lemma}[Star induction] \label{lem:star_induction}
     Let $\mathcal{D}$ be a \PKA{} and let $p \in \mathcal{D}$.  Then for any
     $n \in \mathbb{N}$, $p^n \leq p^\closure$.
   \end{lemma}
   \begin{proof}
     By induction on $n$.  For the base case $n = 0$, we have \[p^0 = 1 \leq
     p^\closure\] by (I1).

     For the induction step, assume that $p^n \leq p^\closure$ and calculate
     \begin{align*}
       p^{n+1} &= p^n \mul p & \text{definition}\\
       &\leq p^\closure \mul p & \text{induction hypothesis, monotonicity}\\
       &\leq p^\closure & \text{(I2)}
     \end{align*}
   \end{proof}

   Now we return to proving congruence of iteration. Suppose $\soundrel$
   satisfies the conditions of Proposition~\ref{prop:ka_sr}, and that $\sound{c}{a}$.  It is easy to prove that
   $\sound{c^n}{a^n}$ for all $n$ by induction on $n$, using the first two
   conditions for $\soundrel$.  It follows from the
   Lemma~\ref{lem:star_induction} that $a^n \leq a^{\closure_{\mathcal{A}}}$
   for all $n$, so we have by the third condition
   $\sound{c^n}{a^{\closure_{\mathcal{A}}}}$ for all $n$.  Finally, the fourth
   condition implies
   $\sound{c^{\closure_{\mathcal{C}}}}{a^{\closure_{\mathcal{A}}}}$ (since
   $\mathcal{C}$ is a quantale, $c^{\closure_{\mathcal{C}}}$ is the least
   upper bound of the set $\{ c^n : n \in \mathbb{N} \}$).
 \end{proof}
 
 We conclude our discussion of soundness relations with an example of how they
 can be used in practice.
 
 \begin{example}
   The correctness of linear recurrence analysis is proved with respect to the
   relational semantics (Examples~\ref{ex:rel_domain} and~\ref{ex:rel_interp}), where the soundness
   relation $\soundrel \subseteq \mathcal{R} \times \mathcal{L}$ is given by
   \begin{displaymath}
     \sound{R}{\varphi} \iff \forall \langle \rho, \rho' \rangle \in R,
     [\rho,\rho'] \models \varphi
   \end{displaymath}
   where $[\rho,\rho']$ denotes the structure such that $[\rho,\rho'](x) =
   \rho(x)$ for every unprimed variable $x$, and $[\rho,\rho'](x') = \rho'(x)$
   for every primed variable $x'$.  We may show that $\soundrel$ is a soundness
   relation by Proposition~\ref{prop:ka_sr}.  Conditions \emph{1} and \emph{2}
   (congruence of $1$ and $\mul$) are trivial.  Conditions \emph{3} and
   \emph{4} follow easily from the definition of $\soundrel$.

   Finally, it is easy to see that for any edge $e$, we have
   $\sound{\sem{e}^\relational}{\lsem{e}}$, so by
   Theorem~\ref{thm:sound_abstract}, for any path expression $p \in \regexp_E$
   we have $\sound{\mathscr{R}[p]}{\mathscr{L}[p]}$.
 \end{example}

 \section{Interprocedural interpretations} \label{sec:interproc}
 A major problem in program analysis is designing analyses that handle
 procedures both efficiently and precisely.  One of the primary benefits of
 compositional program analyses is that they can be easily adapted to the
 interprocedural setting by employing procedure summarization.  The idea
 behind this approach is to compute a summary for each procedure $i$ which
 overapproximates its execution and then uses these summaries to interpret
 \texttt{call} $i$ actions \cite{Sharir1981}.


 One of the challenges involved in interprocedural analysis is the need to
 deal with local variables.  The difficulty lies in the fact that the values
 of local variables before a procedure call are restored after the procedure
 call.  Our approach uses an algebraic quantification operator, inspired by
 algebraic logic \cite{Henkin1971,Halmos1962}, to intuitively \emph{move
   variables out of scope}.  This is a binary operation $\exists x. a$ that
 takes a variable $x$ and a property $a$, and yields a property where the
 variable $x$ is ``out of scope'', but which otherwise approximates $a$.

 Interprocedural analysis presents another challenge that is specific to our
 algebraic approach: the regular operations (choice, sequencing, and
 iteration) are not sufficient for describing interprocedural behaviours,
 since the language of program paths with appropriately matched calls and
 returns (i.e., \emph{interprocedural paths}) is not regular in the presence
 of recursion.  Quantales already have sufficient structure to be appropriate
 semantic domains for interprocedural analyses, but \PKA{}s need to be
 augmented with a widening operator.

 We will now formally define interprocedural semantic domains and
 interpretations.  We will explain the motivation behind these definitions and
 how they solve the problems of local variables and recursion in the rest of
 this section.
 \begin{definition}[Quantified Pre-Kleene Algebra]
   A \emph{quantified pre-Kleene algebra} (\QPKA{}) over a set of variables
   $\Var$ is a tuple $\mathcal{K} = \langle K, \plus, \mul, \closure, \exists,
   \widen, 0, 1 \rangle$, such that $\langle K, \plus, \mul, \closure, 0, 1
   \rangle$ is a \PKA{}, and for the quantification operator $\exists : \Var \times K \rightarrow K$ and 
   the widening operator $\widen : K \times K \rightarrow K$ the following hold:
   \begin{align*}
     (\exists x.a) \plus (\exists x.b) &\leq \exists x. (a \plus b) & \text{\emph{(Q1)}}\\
     \exists x. ((\exists x. a) \mul b) &= (\exists x. a) \mul (\exists x. b) & \text{\emph{(Q2)}}\\
     \exists x. (a \mul (\exists x. b)) &= (\exists x. a) \mul (\exists x. b) & \text{\emph{(Q3)}}\\
     \exists x. \exists y. a &= \exists y. \exists x. a & \text{\emph{(Q4)}}\\
     a \plus b &\leq a \widen b  & \text{\emph{(W1)}}
   \end{align*}
   and for any sequence $\langle a_i \rangle_{i \in \mathbb{N}}$ in $K$, the sequence
   $\tuple{w_i}_{i \in \mathbb{N}}$ defined by
   \begin{align*}
     w_0 &= a_0\\
     w_n &= w_{n-1} \widen a_n
   \end{align*}
   eventually stabilizes (there exists some $k \in \mathbb{N}$ such that for
   all $j > k$, $w_k = w_j$).
 \end{definition}

 We discuss the intuition behind the quantification operator in
 Section~\ref{sec:locals}.  The conditions on quantification come from the
 treatment of existential quantifiers in algebraic logic
 \cite{Halmos1962,Henkin1971} (hence our use of ``quantification'' for our
 scoping operator).  The conditions for the widening operator are standard
 \cite{Cousot1977}.  We adapt quantales to the interprocedural setting in a
 similar fashion, except that we omit the widening operator and require that
 scoping distribute over arbitrary sums.

 \begin{definition}[Quantified Quantale]
   A \emph{quantified quantale} over a set of variables $\Var$ is a tuple
   $\mathcal{K} = \tuple{K, \plus, \mul, \closure, \exists, 0, 1}$ such that
   $\tuple{K,\plus,\mul,\closure,0,1}$ is a quantale and $\exists : \Var
   \times K \rightarrow K$ is an operator such that \emph{(Q2)-(Q4)} hold and
   for any $x \in \Var$ and index set $I$,
   \begin{align*}
     (\exists x. \bigoplus_{i \in I} a_i) &= \bigoplus_{i \in I} (\exists x. a_i) & \text{\emph{(Q1')}}\\
   \end{align*}
 \end{definition}

 Since the order of quantification does not matter (due to (Q4)), for any
 finite set of variables $X = \{x_1, \ldots, x_n\}$ we will write $\exists
 X.a$ to denote $\exists x_1.\exists x_2.\dotsi \exists x_n. a$.

 Recall that we use $E$ to denote the set of control flow edges of all
 procedures, $E_i$ to denote the edges of procedure $i$, and $LV_i$ to denote
 the local variables of procedure $I$.  Interpretations can be adapted to
 interprocedural semantic domains as follows:

 \begin{definition}[Interprocedural interpretation]
   An \emph{interprocedural interpretation} is a pair $\mathscr{I} = \langle
   \mathcal{D}, \sem{\cdot} \rangle$, where $\mathcal{D}$ is an
   interprocedural semantic domain (a quantified \PKA{} or quantale),
   $\sem{\cdot} : E \rightarrow \mathcal{D}$, and such that for any $e \in
   E_i$ and variable $x \in LV_j$ (with $j \neq i$), $\exists x. \sem{e} =
   \sem{e}$ (i.e., the actions belonging to procedure $i$ do not read from or
   write to the local variables of procedure $j$).
 \end{definition}

 \subsection{Local variables} \label{sec:locals}

 We now define the quantification operator for the relational interpretation
 to provide intuition on how our framework deals with local variables.
 Any relation $R \subseteq \Env \times \Env$ partitions the set of variables
 into a \emph{footprint} and a \emph{frame}.  The variables in the footprint
 are the ones that are relevant to the relation (in the sense that the
 variable was read from or written to along some execution represented by the
 relation); the frame is the set of variables that are irrelevant.  Variable quantification intuitively moves a variable from the
 footprint to the frame.  Formally, we define this operator as:
   \[ \exists^\relational x. R = \{ \langle \rho[x \gets n], \rho'[x \gets n] \rangle : \langle \rho, \rho' \rangle \in R, n \in \mathbb{Z}\} \]

   \newpage
\begin{wrapfigure}{r}{2.5cm}
  \footnotesize
  \fbox{\begin{minipage}{2.25cm}
   \texttt{foo():}\\
   \hspace*{0.25cm}\texttt{x := g}\\
   \hspace*{0.25cm}\texttt{if (g < 10)}\\
   \hspace*{0.5cm}\texttt{g := g + x}\\
   \hspace*{0.5cm}\texttt{call foo}\\
   \hspace*{0.5cm}\texttt{g := g - x}    
  \end{minipage}}
 \end{wrapfigure}
   For example, consider the recursive procedure \texttt{foo} shown to the
   right.  The variable \texttt{g} is global and \texttt{x} is local to the
   procedure \texttt{foo}.  The input/output relation of this function is:
   \[ R = \{ \langle \rho, \rho' \rangle : \rho'(\texttt{x}) = \rho(\texttt{g}) = \rho'(\texttt{g}) > 10)\}
   \]

   We wish to verify the correctness of this input/output relation by checking that if $R$ is used as
   the interpretation of the recursive call, the interpretation of the path
   expression from the entry to the exit of \texttt{foo} is again $R$.  This
   check fails because the local variable \texttt{x} is conflated with the
   copy of \texttt{x} from the recursive call (i.e., we incorrectly determine
   that the value of \texttt{x} changes across the recursive call).  The correct interpretation of the call is obtained if \texttt{x} is moved out of scope by using
   \[ \exists^{\mathscr{R}} x. R = \{ \langle \rho, \rho \rangle : \rho(\texttt{g}) > 10 \} \]
   as the interpretation of the call edge.

\begin{wrapfigure}{r}{2.5cm}
  \footnotesize
  \fbox{\begin{minipage}{2.25cm}
   \texttt{bar():}\\
   \hspace*{0.25cm}\texttt{if (x == 0)}\\
   \hspace*{0.5cm}\texttt{assert(flag)}\\
   \hspace*{0.25cm}\texttt{else}\\
   \hspace*{0.5cm}\texttt{x := 0}\\
   \hspace*{0.5cm}\texttt{call bar}\\
   \hspace*{0.25cm}\texttt{flag := 0}\\
  \end{minipage}}
 \end{wrapfigure}
   A more subtle issue with local variables is illustrated in
   procedure \texttt{bar} shown to the right.  Here, \texttt{flag} is a global
   variable and \texttt{x} is local.  Suppose that we wish to compute an
   input/output relation representing the executions of \texttt{bar} which end
   at the assertion.  The set of \emph{interprocedural paths} ending at the
   assertion can be described by the following regular expression:
   \[ (\texttt{assume(x != 0); x := 0; call bar})^*\texttt{; assume(x == 0)} \]

   The relational interpretation of this expression\footnote{using $\{
     \tuple{\rho, \rho[\texttt{flag} \gets 0]} : \rho \in \Env \}$ as the
     interpretation of \texttt{call bar}} is:
   \[ \{ \tuple{\rho,\rho} : \rho(\texttt{x}) \geq 0 \} \]
 
   This relation (incorrectly) indicates that the value of $\texttt{flag}$ in
   the pre-state will be the same as the value of $\texttt{flag}$ in the
   post-state (so if \texttt{bar} is called from state where $\texttt{flag}$
   is nonzero, the assertion will not fail).  The problem is that the
   \texttt{assume(x == 0)} instruction refers to a different instance of the
   local variable \texttt{x} than the one in \texttt{x := 0} in the regular
   expression above.  By applying the quantification operator to move
   \texttt{x} out of scope in the loop, we arrive at the correct
   interpretation:
   \[ \{ \tuple{\rho,\rho} : \rho(\texttt{x}) \geq 0 \} \cup \{ \tuple{\rho,\rho[\texttt{flag} \gets 0]} : \rho(\texttt{x}) \geq 0 \} \]

   A significant benefit of our treatment of local variables using
   quantification operators is that we can handle both these potential issues
   with a single operator, which reduces the burden of designing and
   implementing an interprocedural analysis.


 \subsection{Recursion}

 As mentioned before, the behaviours of recursive programs may be non-regular languages, and beyond the expressive power of choice, sequencing, and iteration operators. 
 Here we provide two solutions to this problem, which emerge from methods for
 defining a solution to a system of recursive equations in the classical
 theory of abstract interpretation \cite{Cousot1977}.  For (quantified) quantales, the
 underlying order forms a complete lattice, so we can use Tarski's fixpoint
 theorem to guarantee the existence of a least fixpoint solution to the
 equation system.  For quantified \PKA{}s, we use a widening
 operator to accelerate convergence of the Kleene iteration sequence
 associated with the equation system.

 First, we factor the interpretation of call edges out of the interpretation
 of path expressions.  Let $\mathscr{I} = \langle \mathcal{D}, \sem{\cdot}
 \rangle$ be an (interprocedural) interpretation.  A \emph{summary assignment}
 (for a program consisting of $N$ procedures) is mapping $S : [1,N]
 \rightarrow \mathcal{D}$ that assigns each procedure a property in
 $\mathcal{D}$ (i.e., the procedure summary).  Given any path expression $p$,
 we use $\iinterp{\mathscr{I}}{S}{p}$ to denote the interpretation of $p$
 within $\mathscr{I}$, using $S$ to interpret calls (this function is defined
 as $\mathscr{I}[p]$ was in Section~\ref{sec:interp}, except edges which
 correspond to \texttt{call} instructions are interpreted using $S$ as the
 semantic function).



 A summary assignment $S$ is \emph{inductive} if for all procedures $i \in
 [1,N]$, we have (recalling that $LV_i$ denotes the set of local variables of
 procedure $i$)
 \[ S(i) \geq \exists LV_i. \iinterp{\mathscr{I}}{S}{\pathexp{\entry_i}{\exit_i}} \]
 \noindent Conceptually, inductive summary assignments are the ``viable
 candidates'' for approximating the behaviour of each procedure.  An equivalent
  way of describing inductive summary assignments is that they are the post-fixed points
of the
 function \[F : ([1,N] \rightarrow \mathcal{D}) \rightarrow ([1,N] \rightarrow \mathcal{D}),\] defined as:
 \[F(S) = \lambda i. \exists LV_i
   . \iinterp{\mathscr{I}}{S}{\pathexp{\entry_i}{\exit_i}}\]

 In order to define the meaning of a program with procedures, we must be able
 to select a \emph{particular} post-fixed point of $F$.  We will use
 $\overline{S}$ to denote this uniquely-selected post-fixed point (regardless
 of the method used to select it).  Our two approaches for selecting
 $\overline{S}$ are as follows:

 \noindent \textbf{\emph{(Quantified) Quantales}}: If the semantic domain
 $\mathcal{D}$ of an interpretation is a quantale, then the underlying order
 $\tuple{\mathcal{D}, \leq}$ is a complete lattice.  It follows that the
 function space $[1,N] \rightarrow \mathcal{D}$ is also a complete lattice
 (under the point-wise ordering).  It is easy to show that (under the
 assumption that $\mathcal{D}$ is a \QPKA{} quantale), the function $F$
 defined above is monotone.  Since $F$ is a monotone function on a complete
 lattice, Tarski's fixed point theorem guarantees that $F$ has a least fixed
 point \cite{Tarski1955}.  We define $\overline{S}$ to be this least fixed
 point.

 \noindent \textbf{\emph{\QPKA{}s}}: If the semantic domain $\mathcal{D}$ of
 an interpretation is a \QPKA{}, then we have access to a widening operator
 $\widen$.  Consider the sequence $\tuple{S_i}_{i \in \mathbb{N}}$ defined by:
 \begin{align*}
   S_1(i) &= 0\\
   S_n(i) &= S_{n-1}(i) \widen (\exists
 LV_i. \iinterp{\mathscr{I}}{S_{n-1}}{\pathexp{\entry_i}{\exit_i}})
 \end{align*}
 Using the properties of the widening operator, we can prove that there exists
 some $k \in \mathbb{N}$ such that $S_{k} = S_{k+1}$, and that $S_k$ is an
 inductive summary assignment.  We define $\overline{S} = S_k$.  In practice,
 an efficient chaotic iteration strategy \cite{Bourdoncle1993} may be employed
 to compute $S_k = \overline{S}$.

 \begin{example}
   To extend linear recurrence analysis
   to handle procedures, we must define a quantification operator $\exists^{\mathscr{L}}$ and a
   widening operator $\widen^{\mathscr{L}}$.  The quantification operator mirrors the one from the
   relational interpretation:
   \[ \exists^{\mathscr{L}} x. \varphi = \formula{(\exists x \exists x'. \varphi) \land x = x'} \]

   \noindent As with the sequencing and choice operators, this quantification
   operator is precise (i.e., it loses no information from the perspective of
   the relational interpretation).

   There are several possible choices for the widening operator $\varphi_1
   \widen^{\mathscr{L}} \varphi_2$ on $\mathscr{L}$.  One natural choice is to
   leverage existing widening operators.  For example, we may compute the best
   abstraction of $\varphi_1$ and $\varphi_2$ as convex polyhedra, and then
   apply a widening operator from that domain \cite{Cousot1978}.
 \end{example}

 \section{Interprocedural analysis} \label{sec:inter_analysis}
 We now adapt the path expression algorithm to the interprocedural framework.
 Let $\mathscr{I} = \tuple{\mathcal{D}, \sem{\cdot}}$ be an interpretation
 such that $\mathcal{D}$ is equipped with a widening operator and let
 $\mathcal{P} = \{ P_i \}_{1 \leq i \leq N}$ be a program, where $P_1$ denotes
 the main procedure of $\mathcal{P}$.  The \emph{interprocedural path
   expression problem} is to compute for each procedure $P_k$ and each vertex
 $v$ of $P_k$ an approximation of the interprocedural paths
 from $\entry_1$ to $v$.  We use $\mathscr{I}\tuple{v}$ to denote this
 approximation.


 An \emph{interprocedural path} $\pi = e_1 \dotsi e_n$ is a sequence of edges
 such that for all $i < n$, either $\tgt(e_i) = \src(e_{i+1})$, or $\tgt(e_i)$
 has an outgoing edge which is labeled \texttt{call $k$} and $\src(e_{i+1}) =
 \entry_k$ (for some $k$).  This definition of interprocedural paths differs
 from the standard one \cite{Sharir1981}: rather than requiring that every
 return edge be matched with a call edge, we allow each edge labeled
 \texttt{call $k$} to stand for the entire execution of procedure $P_k$.  The
 benefit of this approach is that the set of interprocedural paths from one
 vertex to another is a regular language (whereas for \cite{Sharir1981} it is
 context-free, but not necessarily regular).  Note that, since call edges are
 interpreted as complete executions of a procedure, transferring control from
 one procedure to another is not marked by a call edge.

 Any interprocedural path from $\entry_1$ to a vertex $v$ belonging to
 procedure $P_k$ can be decomposed into an interprocedural path from $\entry_1$
 to $\entry_k$ and an intraprocedural path from $\entry_k$ to $v$.  So we may
 compute $\mathscr{I}\tuple{v}$ as
 \[ \mathscr{I}\tuple{v} = \mathscr{I}\tuple{\entry_k} \mul \iinterp{\mathscr{I}}{\overline{S}}{\pathexp{\entry_k}{v}} \]
 As a result of this decomposition, we need only to solve the interprocedural
 path expression problem for procedure entry vertices.  This problem can be
 solved by applying Tarjan's path-expression algorithm to the \emph{call
   graph} of $\mathcal{P}$.

 The \emph{call graph} $CG_{\mathcal{P}} = \langle \mathcal{P}, CE \rangle$
 for a program $\mathcal{P}$ is a directed graph whose vertices are the
 procedures $\{P_i\}_{1 \leq i \leq N}$, and such that there is an edge
 (called a \emph{call edge}) $P_i \rightarrow P_j \in CE$ iff there exists
 some edge $e \in E_i$ such that $\act(e) = \texttt{call } j$.  To apply
 Tarjan's algorithm to $CG_{\mathcal{P}}$, we must define an interpretation
 for call edges.  The interpretation of a call edge $P_i \rightarrow P_j \in
 CE$ should be an approximation of all paths that begin at the entry of $P_i$
 and end in a call of $P_j$.  The local variables of $P_i$ go out of scope
 when the program enters $P_j$, so variables in $LV_i$ should be removed from
 the footprint of $P_i \rightarrow P_j$.  Formally,
 \begin{equation*}
 \begin{split}
 \iinterp{\mathscr{I}}{\overline{S}}{P_i \rightarrow P_j} =  \exists LV_i. \bigoplus \{ \iinterp{\mathscr{I}}{\overline{S}}{\pathexp{\entry_i}{\src(e)}} :  e \in E_i,\\ \act(e) = \texttt{call } j \}  
 \end{split}
 \end{equation*}

 We formalize the interprocedural program analysis algorithm we have developed
 in this section in Algorithm~\ref{alg:interproc}.

 \begin{algorithm}
   \caption{Interprocedural path-expressions} \label{alg:interproc}
   \begin{algorithmic}
     \Require An interpretation $\mathscr{I}$ and a program $\mathcal{P} = \{P_i\}_{1 \leq i \leq N}$
     \Ensure An array $\texttt{PathTo}$ s.t. $\forall k \in [1,N]$ and each
      vertex $v$ of $k$, $\texttt{PathTo}[k][v] = \mathscr{I}\tuple{v}$.
     \State $S \gets \lambda i. 0$
     \Repeat
     \Comment{Compute procedure summaries}
     \State $S' \gets S$
     \State $S \gets \lambda i. S(i) \widen \exists LV_i. \iinterp{\mathscr{I}}{S}{\pathexp{\entry_i}{\exit_i}}$
     \Until{$S = S'$}
     \ForAll{$i \in [1,N]$}
     \Comment Compute paths to procedure entry points
     \State $\texttt{PathTo}[i][\entry_i] \gets \iinterp{\mathscr{I}}{S}{\pathexp{P_1}{P_i}}$
     \EndFor
     \ForAll{$i \in [1,N]$}
     \Comment{Compute paths to each vertex}
     \ForAll{$v \in V_i \setminus \{\entry_i\}$}
     \State $\texttt{PathTo}[i][v] \gets \texttt{PathTo}[i][\entry_i] \mul \iinterp{\mathscr{I}}{S}{\pathexp{\entry_i}{v}}$
     \EndFor
     \EndFor
     \State \Return $\texttt{PathTo}$
   \end{algorithmic}
 \end{algorithm}

 \paragraph{Example: Interprocedural Linear Recurrence Analysis}

 We now present an example of how our interprocedural analysis algorithm works
 using linear recurrence analysis on a simple program.  Consider the program
 pictured in Figure~\ref{fig:interproc_ex}, which includes two procedures:
 \texttt{main} (the entry point of the program) and \texttt{foo}.  The action
 of this program is to set \texttt{g} to 20 and then decrement it 10 times.
 The decrement loop is implemented with a recursive procedure \texttt{foo}
 which uses its parameter to keep track of the loop count (note that, since in
 our model procedures do not have parameters, we model parameter passing via
 assignment to the global variable \texttt{p0}).  We will illustrate how
 linear recurrence analysis can be used to prove that the assertion in
 \texttt{foo} (that \texttt{g} is positive) will always succeed.

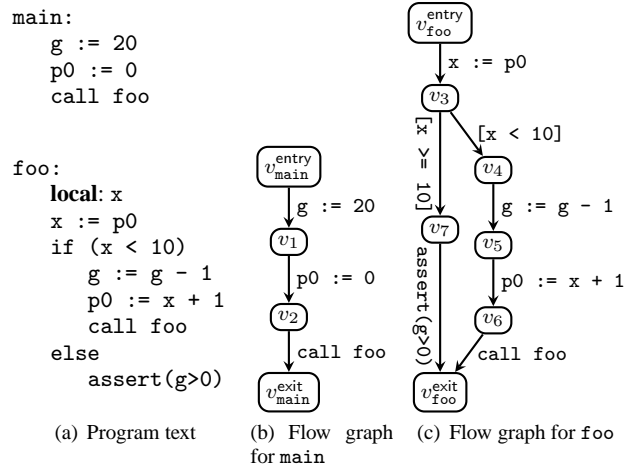
\begin{figure}
  \begin{center}
  \subfigure[Program text]{
    \begin{minipage}[b]{3cm}
\texttt{main:}\\
\hspace*{0.5cm}\texttt{g := 20}\\
\hspace*{0.5cm}\texttt{p0 := 0}\\
\hspace*{0.5cm}\texttt{call foo}\\

\vspace*{0.25cm}
\texttt{foo:}\\
\hspace*{0.5cm}\textbf{local}: \texttt{x}\\
\hspace*{0.5cm}\texttt{x := p0}\\
\hspace*{0.5cm}\texttt{if (x < 10)}\\
\hspace*{1cm}\texttt{g := g - 1}\\
\hspace*{1cm}\texttt{p0 := x + 1}\\
\hspace*{1cm}\texttt{call foo}
\hspace*{0.5cm}\texttt{else}\\
\hspace*{1cm}\texttt{assert(g>0)}\\
    \end{minipage}
  }
  \subfigure[Flow graph for \texttt{main}]{
  \begin{tikzpicture}[base,node distance=1cm]
  \footnotesize
  \node [rectangle,rounded corners,draw] (entry) {$\entry_{\texttt{main}}$};
  \node [rectangle,rounded corners,draw,below of=entry] (v1) {$v_1$};
  \node [rectangle,rounded corners,draw,below of=v1] (v2) {$v_2$};
  \node [rectangle,rounded corners,draw,below of=v2] (v3) {$\exit_{\texttt{main}}$};

  \path (entry) edge[->] node[right]{\texttt{g := 20}} (v1);
  \path (v1) edge[->] node[right]{\texttt{p0 := 0}} (v2);
  \path (v2) edge[->] node[right]{\texttt{call foo}} (v3);
  \end{tikzpicture}}
  \subfigure[Flow graph for \texttt{foo}]{
  \begin{tikzpicture}[base,node distance=1cm]
  \footnotesize
  \node [rectangle,rounded corners,draw] (entry) {$\entry_{\texttt{foo}}$};
  \node [rectangle,rounded corners,draw,below of=entry] (v3) {$v_3$};
  \node [rectangle,rounded corners,draw,below right of=v3,yshift=-0.25cm] (v4) {$v_4$};
  \node [rectangle,rounded corners,draw,below of=v4] (v5) {$v_5$};
  \node [rectangle,rounded corners,draw,below of=v5] (v6) {$v_6$};
  \node [rectangle,rounded corners,draw,below left of=v6,yshift=-0.25cm] (exit) {$\exit_{\texttt{foo}}$};

  \node [rectangle,rounded corners,draw,below of=v3,node distance=1.75cm] (v7) {$v_7$};

  \path (entry) edge[->] node[right]{\texttt{x := p0}} (v3);
  \path (v3) edge[->] node[right]{\texttt{[x < 10]}} (v4);
  \path (v4) edge[->] node[right]{\texttt{g := g - 1}} (v5);
  \path (v5) edge[->] node[right]{\texttt{p0 := x + 1}} (v6);
  \path (v6) edge[->] node[right]{\texttt{call foo}} (exit);
  \path (v3) edge[->,sloped] node[below]{\texttt{[x >= 10]}} (v7);
  \path (v7) edge[->,sloped] node[below]{\texttt{assert(g>0)}} (exit);
  \end{tikzpicture}}

  \end{center}
  \caption{An example program with procedures for interprocedural linear recurrence analysis.} \label{fig:interproc_ex}
\end{figure}

We begin by computing an inductive summary assignment $\overline{S}$.  For the
sake of this example, let us assume that we use an extremely imprecise
widening operator, defined as
\[ \phi \widen \phi' = 
\begin{cases}
  \phi & \text{if } \phi \iff \phi'\\
  true & \text{otherwise}
\end{cases}\]

First, we compute (using a path-expression algorithm
\cite{Tarjan1981,Scholz2007}), a path expression for each procedure, which
represents the set of paths from the entry of that procedure to its exit.  One
possible result is:
\[
  \pathexp{\entry_{\texttt{main}}}{\exit_{\texttt{main}}} = \tuple{\entry_{\texttt{main}},v_1} \tuple{v_1,v_2} \tuple{v_2,\exit_{\texttt{main}}}
\]
\begin{align*}
 \pathexp{\entry_{\texttt{foo}}}{\exit_{\texttt{foo}}} =  \tuple{\entry_{\texttt{foo}},v_3} \Big(& (\tuple{v_3,v_4} \tuple{v_4,v_5} \tuple{v_5,v_6} \tuple{v_6,\exit_{\texttt{foo}}})\\
& + \tuple{v_3,v_7}\tuple{v_7,\exit_{\texttt{foo}}}\Big)
\end{align*}

We compute a summary assignment for each procedure by repeatedly interpreting
these path expressions in $\mathscr{L}$, using the summary for $\texttt{foo}$
to interpret calls to \texttt{foo}, until the summary assignment converges.
This computation proceeds as follows:
 \begin{align*}
   S_0(\texttt{foo}) &= 0^\logi = \formula{\false}\\
   S_0(\texttt{main}) &= 0^\logi = \formula{\false}\\
   S_1(\texttt{foo}) &= 0^\logi \widen \big(\exists x. \iinterp{\mathscr{L}}{S_0}{\pathexp{\entry_{\texttt{foo}}}{\exit_{\texttt{foo}}}}] \big)\\
     &= 0 \widen \Big(\exists x.\formula{x' \geq 10 \land x' = p0 \land p0' = p0 \land g' = g}\Big)\\
     &= 0^\logi \widen \formula{p0 \geq 10 \land p0' = p0 \land g' = g \land x'=x}\\
     &= \formula{\true}\\
   S_1(\texttt{main}) &= 0^\logi \widen \iinterp{\mathscr{L}}{S_0}{\pathexp{\entry_{\texttt{main}}}{\exit_{\texttt{main}}}}]
     = 0^\logi\\
   S_2(\texttt{main}) &= 0^\logi \widen \iinterp{\mathscr{L}}{S_1}{\pathexp{\entry_{\texttt{main}}}{\exit_{\texttt{main}}}}]
     = \formula{\true}
 \end{align*}

\begin{wrapfigure}{r}{2cm}
  \footnotesize
  \hspace*{-0.75cm}
  \begin{tikzpicture}[node distance=1.5cm,>=stealth]
    \node [ellipse,draw] (main) {\texttt{main}};
    \node [ellipse,draw,right of=main] (foo) {\texttt{foo}};
    \path (main) edge[->] (foo);
    \path [draw,->] (foo.270) .. controls +(270:0.5cm) and +(330:0.5cm) .. (foo.330);
  \end{tikzpicture}
 \end{wrapfigure}
At $S_2$ the fixedpoint iteration converges, and we define our inductive
summary assignment as $\overline{S} = S_2$.  Next, we compute summaries to the
entry point of each procedure, using the path expression algorithm on the call
graph shown to the right.  The interpretation of the edges is given by the
following:
 \begin{align*}
   \iinterp{\mathscr{L}}{\overline{S}}{\texttt{main} \rightarrow \texttt{foo}} &= \formula{g' = 20 \land p0' = 0 \land x' = x}\\
   \iinterp{\mathscr{L}}{\overline{S}}{\texttt{foo} \rightarrow \texttt{foo}} &= \fbox{\parbox{5cm}{$p0 < 0 \land p0' = p0 + 1 \land g' = g - 1$\\$\land x' = x$}}
 \end{align*}
which yields the following path-to-entry summaries:
 \begin{align*}
   \mathscr{I}\tuple{\entry_{\texttt{main}}} &= 1^\logi\\
   \mathscr{I}\tuple{\entry_{\texttt{foo}}} &= \mathscr{I}(\overline{S})[(\texttt{main} \rightarrow \texttt{foo})\cdot(\texttt{foo} \rightarrow \texttt{foo})^*]\\
   &=\formula{g' = 20 - p0' \land p0' \leq 10 \land x'=x}
 \end{align*}

Finally, we may compute a summary which overapproximates the executions to $v_7$:
 \begin{align*}
   \mathscr{I}\tuple{v_7} &= \mathscr{I}\tuple{\entry_{\texttt{foo}}} \lmul \iinterp{\mathscr{I}}{\overline{S}}{\pathexp{\entry_{\texttt{foo}}}{v_7}}\\
   &=\formula{g' = 20 - p0' \land p0' = 10 \land x' = p0'}
 \end{align*}
from which we may prove that the assertion in \texttt{foo} may never fail.

This example illustrates an interesting feature of our framework:
interprocedural loops (which result from recursive procedures such as
\texttt{foo}) are analyzed in exactly the same way as intraprocedural loops
(by simply applying an iteration operator).  This is a significant advantage of
our generic algebraic framework over defining a compositional analysis by
structural induction on the program syntax (as in
\cite{Ammarguellat1990,Popeea2006,Biallas2012}), which requires separate
treatment for interprocedural loops.

 \subsection{Correctness and precision} \label{sec:inter_correctness}

 It is easy to prove that Algorithm~\ref{alg:interproc} computes an array
 $\texttt{PathTo}$ such that for each procedure $P_k$ and each vertex $v$ of
 $P_k$, $\texttt{PathTo}[k][v] = \mathscr{I}\tuple{v}$.  As in
 Section~\ref{sec:intra_correct}, we justify the correctness of our
 interprocedural path-expression algorithm with respect to interprocedural
 paths.  We use $\mathsf{IVP}(v)$ to denote the set of interprocedurally valid
 paths (i.e., with matched calls and returns, as in \cite{Sharir1981}) to the
 vertex $v$, and for any $\pi \in \mathsf{IVP}(v)$, we use $\mathscr{I}[\pi]$
 to denote the interpretation of $\pi$ in $\mathscr{I}$.  The formal
 definition of $\mathscr{I}[\pi]$, as well as all proofs in this section, can
 be found in the extended version of this paper \cite{fullversion}.
 
 First, we have that Algorithm~\ref{alg:interproc} soundly approximates
 interprocedurally valid paths if the semantic algebra is a \QPKA{}:
 \begin{theorem}[Correctness]
   Let $\mathscr{I}$ be an interpretation over a \QPKA{}, $P_k$ be a procedure,
   $v$ be a vertex of $P_k$, and let $\pi \in \mathsf{IVP}(v)$ be an
   interprocedural path to $v$.  Then $\mathscr{I}[\pi] \leq
   \mathscr{I}\tuple{v}$.
 \end{theorem}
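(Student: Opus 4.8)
The plan is to reduce the interprocedural statement to the intraprocedural Correctness theorem by replacing each completed procedure call occurring inside $\pi$ with the corresponding summary from $\overline{S}$, and to handle the save/restore semantics of local variables using the quantification axioms \emph{(Q2)}--\emph{(Q4)}. The whole argument rests on the fact that $\overline{S}$ is an \emph{inductive} summary assignment, i.e.\ $\overline{S}(i) \geq \exists LV_i.\, \iinterp{\mathscr{I}}{\overline{S}}{\pathexp{\entry_i}{\exit_i}}$, which holds by construction of $\overline{S}$, together with the fact that a \QPKA{} is in particular a \PKA{}, so the intraprocedural Correctness theorem applies to $\iinterp{\mathscr{I}}{\overline{S}}{\cdot}$.

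The central lemma I would prove first is a \emph{summary soundness} claim: for every procedure $P_i$ and every balanced sub-execution $\sigma$ running $P_i$ from $\entry_i$ to $\exit_i$ (itself interprocedurally valid), we have $\exists LV_i.\, \mathscr{I}[\sigma] \leq \overline{S}(i)$. I would prove this by induction on the \emph{call-nesting depth} of $\sigma$ (the height of its call tree), a finite, well-founded measure. In the inductive step I factor $\sigma$ at its top-level calls: between consecutive call edges of $P_i$ the path runs a purely local segment, and at each call edge to some $P_j$ it splices in a complete balanced execution $\tau$ of $P_j$ of strictly smaller nesting depth. The induction hypothesis gives $\exists LV_j.\, \mathscr{I}[\tau] \leq \overline{S}(j)$, so by monotonicity of $\mul$ (the pre-distributivity axioms) each spliced-in callee execution is dominated by its summary, and $\sigma$ is approximated by an ordinary word $\hat\sigma \in \regexp_{E_i}$ interpreted with $\overline{S}$ for calls, i.e.\ $\mathscr{I}[\sigma] \leq \iinterp{\mathscr{I}}{\overline{S}}{\hat\sigma}$. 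Since $\hat\sigma$ is a genuine intraprocedural path of $P_i$ from $\entry_i$ to $\exit_i$, the intraprocedural Correctness theorem yields $\iinterp{\mathscr{I}}{\overline{S}}{\hat\sigma} \leq \iinterp{\mathscr{I}}{\overline{S}}{\pathexp{\entry_i}{\exit_i}}$; quantifying out $LV_i$ and using inductivity of $\overline{S}$ closes the lemma.

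With the lemma in hand I would prove the theorem by decomposing the given $\pi \in \mathsf{IVP}(v)$ along its \emph{pending} call stack. Reading $\pi$ from $\entry_1$ to $v \in P_k$, the unmatched calls induce a call-graph path $P_1 \to P_{i_1} \to \dotsi \to P_k$; within each pending frame $P_{i_t}$ the execution runs a prefix ending at a call to $P_{i_{t+1}}$ (whose own completed sub-calls are bounded by summaries via the lemma, and whose locals $LV_{i_t}$ are then quantified away), and the final active frame runs from $\entry_k$ to $v$. Each pending prefix is bounded, after quantifying its locals, by the call-graph edge interpretation $\iinterp{\mathscr{I}}{\overline{S}}{P_{i_t} \to P_{i_{t+1}}}$; composing these along the call-graph path and applying the intraprocedural Correctness theorem \emph{to the call graph} (which is exactly how $\mathscr{I}\tuple{\entry_k}$ is computed) gives $\mathscr{I}[\pi_{\mathsf{stack}}] \leq \mathscr{I}\tuple{\entry_k}$. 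The active frame is, by the same splicing argument and the intraprocedural theorem, bounded by $\iinterp{\mathscr{I}}{\overline{S}}{\pathexp{\entry_k}{v}}$. Multiplying the two bounds and invoking the decomposition $\mathscr{I}\tuple{v} = \mathscr{I}\tuple{\entry_k} \mul \iinterp{\mathscr{I}}{\overline{S}}{\pathexp{\entry_k}{v}}$ yields the claim.

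The main obstacle I anticipate is the bookkeeping for local variables: I must show that replacing a callee execution $\tau$ of $P_j$ by $\overline{S}(j)$ inside a larger sequential composition is sound, which is precisely where the quantification axioms are needed. The delicate point is that $\exists LV_j$ must commute past the surrounding caller actions without disturbing them; this is guaranteed because, by the definition of an interprocedural interpretation, caller edges are invariant under $\exists x$ for $x \in LV_j$, and \emph{(Q2)}--\emph{(Q3)} let me pull the callee's quantifier through the sequential context while \emph{(Q4)} handles nested scopes. Getting this commutation exactly right, and matching it to the formal definition of $\mathscr{I}[\pi]$ for interprocedurally valid paths (given in the full version), is where the real work lies; the algebraic manipulations for choice, sequencing, and iteration are then routine given the intraprocedural theorem.
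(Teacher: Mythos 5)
A caveat up front: the paper never actually proves this theorem in the source you were graded against --- Section~\ref{sec:inter_correctness} defers all proofs of that section to the extended version \cite{fullversion} --- so the only in-paper material to compare with is the appendix, which gives the formal definition $\mathscr{I}[\pi]=\textit{flatten}(\textit{stack}(\pi))$ and proves the two neighbouring interprocedural results (Abstraction, and Coincidence for quantales). Judged on its own merits and against that machinery, your outline is correct and genuinely different in style. The appendix argues algebraically: congruence relations between interpretations (Lemmas~\ref{lem:pathexp_rel} and~\ref{lem:analysis_rel}), induction over the iteration sequence defining $\overline{S}$, and, for the quantale case, factoring any interpretation through a quantified-word interpretation via a homomorphism --- a device that requires arbitrary sums and is therefore unavailable over a bare \QPKA{}. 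Your argument is instead path-combinatorial: a summary-soundness lemma ($\exists LV_i.\,\mathscr{I}[\sigma] \leq \overline{S}(i)$ for balanced executions $\sigma$ of $P_i$) proved by induction on call-nesting depth, followed by the pending-call decomposition of $\pi$ and two applications of the intraprocedural Correctness theorem --- once inside each flow graph with call edges interpreted by $\overline{S}$, once on the call graph --- composed via $\mathscr{I}\tuple{v} = \mathscr{I}\tuple{\entry_k} \mul \iinterp{\mathscr{I}}{\overline{S}}{\pathexp{\entry_k}{v}}$. This correctly isolates the two facts that drive the theorem: inductivity of $\overline{S}$ (guaranteed at the stabilized widening sequence by \emph{(W1)}) and monotonicity. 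What each approach buys: yours is elementary, self-contained, and works at \QPKA{} generality, which is exactly what the inequality-only statement needs; the paper's congruence/homomorphism skeleton is heavier but is reused across several theorems and yields the stronger equality (precision) statement for quantales.

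Two technical remarks. First, the ``real work'' you anticipate with \emph{(Q2)}--\emph{(Q4)} largely evaporates under the paper's actual definition of $\mathscr{I}[\pi]$: in the stack semantics, a return edge pops the callee frame and multiplies $(\exists LV_j.\,a_{\textsf{callee}})$ into the caller's value, so every completed call already contributes its interpretation in quantified form, and your induction hypothesis $\exists LV_j.\,\mathscr{I}[\tau] \leq \overline{S}(j)$ substitutes in by monotonicity of $\mul$ alone (the appendix even remarks that \emph{(Q2)}--\emph{(Q4)} are irrelevant to its coincidence proof). Second, what you do use silently, and should state, is monotonicity of the quantification operator --- needed to pass from a single path to $\pathexp{\entry_i}{\exit_i}$ underneath $\exists LV_i$, and to bound a single call site by the finite sum in the call-graph edge interpretation $\iinterp{\mathscr{I}}{\overline{S}}{P_i \rightarrow P_j}$. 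It is not an axiom, but follows from \emph{(Q1)}: if $a \leq b$ then $\exists x. b = \exists x.(a \plus b) \geq (\exists x. a) \plus (\exists x. b) \geq \exists x. a$. With those two adjustments your outline fills in to a complete proof.
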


 Second, we have that Algorithm~\ref{alg:interproc} computes exactly the sum
 over interprocedurally valid paths in the case that the semantic algebra is a
 quantified quantale.
 \begin{theorem}[Precision]
   Let $\mathscr{I}$ be an interpretation over a (quantified) quantale, 
   $P_k$ be a procedure, $v$ be a vertex of $P_k$.  Then
   $\mathscr{I}\tuple{v} = \bigoplus_{\pi \in \mathsf{IVP}(v)} \mathscr{I}[\pi]$.
 \end{theorem}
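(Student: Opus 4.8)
The plan is to reduce the interprocedural precision statement to the intraprocedural precision theorem (Theorem~\ref{thm:intrap_mop}) by proving that the summary $\overline{S}$ computed in Algorithm~\ref{alg:interproc} is itself an exact sum over interprocedurally valid paths. The argument proceeds in three stages: first characterize the procedure summaries $\overline{S}(i)$, then the path-to-entry summaries $\mathscr{I}\tuple{\entry_k}$, and finally assemble $\mathscr{I}\tuple{v}$. A preliminary observation makes this possible: since $\mathcal{D}$ is a quantified quantale, $\mul$ distributes over arbitrary sums, $\exists$ distributes over arbitrary sums by (Q1'), and $\plus$ is the join of a complete lattice; hence $F$ is Scott-continuous and its least fixed point is the join of the Kleene sequence, $\overline{S} = \bigoplus_n S_n$ where $S_n = F^n(\lambda i.\,0)$.

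The central lemma (Stage 1) is the characterization
\[ \overline{S}(i) = \exists LV_i. \bigoplus_{\pi \in \mathsf{IVP}(\entry_i,\exit_i)} \mathscr{I}[\pi], \]
where $\mathsf{IVP}(\entry_i,\exit_i)$ denotes the complete (matched) interprocedural executions of procedure $i$. I would prove it by induction on $n$, showing $S_n(i) = \exists LV_i. \bigoplus \{ \mathscr{I}[\pi] : \pi \in \mathsf{IVP}(\entry_i,\exit_i),\ \text{call-nesting depth} < n \}$. In the inductive step, the interpretation $\iinterp{\mathscr{I}}{S_{n-1}}{\pathexp{\entry_i}{\exit_i}}$ uses $S_{n-1}$ to interpret call edges; the intraprocedural precision theorem rewrites it as a sum over intraprocedural paths of $P_i$, and substituting the induction hypothesis for each call edge (legitimate because $\mul$ distributes over arbitrary sums) replaces each atomic call to $P_j$ by the sum over complete executions of $P_j$, exactly reproducing the valid interprocedural executions of nesting depth $<n$. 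Taking the join over $n$ yields the lemma.

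Stage 2 applies the intraprocedural precision theorem to the call graph: since $\mathscr{I}\tuple{\entry_k} = \iinterp{\mathscr{I}}{\overline{S}}{\pathexp{P_1}{P_k}}$, the theorem expresses it as a sum over call-graph paths, and unfolding the call-edge interpretation (itself a quantified sum of intraprocedural prefixes, by its definition) together with Stage 1 identifies it with $\bigoplus_{\pi \in \mathsf{IVP}(\entry_k)} \mathscr{I}[\pi]$. Stage 3 then combines the pieces: I would apply the precision theorem once more to $\iinterp{\mathscr{I}}{\overline{S}}{\pathexp{\entry_k}{v}}$, use the decomposition $\mathscr{I}\tuple{v} = \mathscr{I}\tuple{\entry_k} \mul \iinterp{\mathscr{I}}{\overline{S}}{\pathexp{\entry_k}{v}}$, and appeal to distributivity of $\mul$ over sums together with the fact that every $\pi \in \mathsf{IVP}(v)$ factors uniquely as an interprocedural path to $\entry_k$ followed by an intraprocedural path from $\entry_k$ to $v$ with its calls expanded by Stage 1.

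The hard part will be Stage 1, and specifically the bookkeeping of local variables. Matching the algebraic quantification $\exists LV_i$ against the semantic scoping built into $\mathscr{I}[\pi]$ for a valid interprocedural path — ensuring that when an atomic call to $P_j$ is replaced by a complete execution of $P_j$, the caller's locals are preserved across the call while the callee's locals are correctly hidden — is exactly where axioms (Q2)--(Q4) are needed, and aligning these quantifier manipulations with the recursive structure of $\mathsf{IVP}$ is the delicate step. Verifying the Scott-continuity of $F$ in the quantale setting (to justify the Kleene-sequence characterization of $\overline{S}$) is a secondary technical point.
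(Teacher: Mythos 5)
Your proposal is correct, but it takes a genuinely different route from the paper. The paper (Theorem~\ref{thm:coincidence} in the appendix) never unfolds the fixpoint inside the quantale at all: it introduces a syntactic ``quantified word'' domain $\mathcal{W}$ --- sets of edge-words decorated with explicit quantifier markers --- runs the entire interprocedural algorithm in $\mathcal{W}$ (where $\mathscr{W}\tuple{v}$ is, essentially by construction, the set of quantifier-decorated interprocedurally valid paths to $v$), and then transports the result to the target quantale $\mathcal{Q}$ along the evaluation homomorphism $h$, whose graph is shown to be a congruence so that the machinery already built for the abstraction theorem (Lemmas~\ref{lem:pathexp_rel} and~\ref{lem:analysis_rel}, plus transfinite induction on the iteration sequence) applies; Lemma~\ref{lem:flatten} then matches word evaluation against the $\textit{stack}/\textit{flatten}$ semantics. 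You instead stay inside $\mathcal{Q}$: you justify $\overline{S}=\bigoplus_n F^n(\bot)$ by Scott-continuity, characterize each $S_n(i)$ as the sum over matched executions of call-nesting depth $<n$, and assemble via the intraprocedural precision theorem (on the flow graphs and on the call graph) together with complete distributivity and the prefix/suffix factorization of paths in $\mathsf{IVP}(v)$. Both arguments are sound; the paper's buys reuse of the congruence lemmas, no continuity requirement (transfinite induction suffices), and isolation of all path combinatorics in a free syntactic domain, while yours is more self-contained and yields an explicit operational reading of the procedure summaries, at the cost of the depth-stratification bookkeeping, the continuity of $F$ (including continuity of $\closure$ in a quantale), and a compositionality (``frame'') lemma for the stack semantics stating that a callee's matched execution evaluated on top of the stack equals its standalone value.

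One inaccuracy worth flagging: you claim the local-variable bookkeeping in Stage~1 is ``exactly where axioms (Q2)--(Q4) are needed.'' In fact they are not needed there (or anywhere): the matching between the algebraic product containing $\exists LV_j.\,\mathscr{I}[\pi']$ and the stack semantics is definitional --- the return rule of $\widehat{\sem{e}}$ performs precisely the multiplication $a \mul (\exists LV_j.\,b)$ --- and the only quantifier law your argument uses is distribution over sums (Q1$'$). The paper makes this point explicitly: its word domain $\mathcal{W}$ \emph{violates} (Q2)--(Q4), and its proof notes that those axioms ``are irrelevant for this proof.'' This does not invalidate your plan (the hypotheses grant you those axioms), but the delicate step is the frame lemma itself, not any quantifier-commutation reasoning.
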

 Note that the precision theorem above also addresses the subtle issues
 involving local variables, as in \cite{Knoop1992b,Lal2005} (e.g., it shows
 that we do not lose correlations between local and global variables due to
 procedure calls, as \cite{Lal2005} points out that \cite{Reps1995} does).

 \subsection{Abstraction} \label{sec:inter_abstract}
 In this section, we adapt the abstract interpretation framework of
 Section~\ref{sec:intra_abstract} to the interprocedural setting.  The
 conditions of Proposition~\ref{prop:ka_sr} are much easier to adapt to the
 interprocedural setting than the more general definition of a soundness
 relation, so we present this simplified version here.  Under the assumptions
 of Proposition~\ref{prop:ka_sr}, all that is required to extend soundness
 relations to the interprocedural setting is an additional congruence
 condition for the quantification operator (which ensures that ``abstract''
 quantification approximates ``concrete'' quantification).
 
 \begin{definition}[Interprocedural soundness relation]
   Let $\mathcal{C}$ be a quantified quantale and let $\mathcal{A}$ be a
   \QPKA{}.  A relation $\soundrel \subseteq \mathcal{C} \times \mathcal{A}$
   is a \emph{soundness relation} if $\soundrel$ satisfies the conditions of
   Proposition~\ref{prop:ka_sr}, and for all $\sound{c}{a}$, and all $x \in
   \Var$, we have $\sound{\exists_{\mathcal{C}} x. c}{\exists_{\mathcal{A}}
     x. a}$.
 \end{definition}

 The abstraction theorem carries over to the interprocedural setting,
 and is stated below.  The proof of this theorem is similar to the
 intraprocedural variant, except that we require some of the fixpoint
 machinery of classical abstract interpretation \cite{Cousot1977} due to the
 inductive summary assignment computation.

 \begin{theorem}[Abstraction] \label{thm:sound_abstraction}
   Let $\mathscr{C}, \mathscr{A}$ be interpretations, where the 
   domain of $\mathscr{C}$ is a quantified quantale and the domain of
   $\mathscr{A}$ is a \QPKA{}.  Let $\soundrel \subseteq \mathcal{C} \times
   \mathcal{A}$ be a soundness relation, and let $v$ be a vertex of some
   procedure.  If for all $e \in E$, $\sound{\natsem{e}}{\abssem{e}}$, then
   $\sound{\mathscr{C}\langle v \rangle}{\mathscr{A}\langle v \rangle}$.
 \end{theorem}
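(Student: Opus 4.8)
The plan is to reduce the interprocedural statement to the intraprocedural Abstraction theorem (Theorem~\ref{thm:sound_abstract}) together with a fixpoint-approximation argument relating the two inductive summary assignments. Throughout, write $\overline{S}$ for the concrete summary assignment (the least fixed point of $F_{\mathscr{C}}$, which exists since $\mathcal{C}$ is a quantified quantale) and $\overline{T}$ for the abstract one (the post-fixed point produced by widening iteration in the \QPKA{} $\mathcal{A}$). I extend $\soundrel$ to summary assignments pointwise: $\sound{S}{T}$ means $\sound{S(i)}{T(i)}$ for every procedure $i \in [1,N]$. Recall that since $\soundrel$ is an interprocedural soundness relation it satisfies the conditions of Proposition~\ref{prop:ka_sr}, and hence (by that proposition) is an ordinary soundness relation---congruent for $0$, $1$, $\plus$, $\mul$, $\closure$---and additionally congruent for $\exists x$.

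First I would establish congruence of the summary-indexed interpretation: if $\sound{S}{T}$, then for every path expression $p$ we have $\sound{\iinterp{\mathscr{C}}{S}{p}}{\iinterp{\mathscr{A}}{T}{p}}$. This is immediate from Theorem~\ref{thm:sound_abstract}: the maps $\iinterp{\mathscr{C}}{S}{\cdot}$ and $\iinterp{\mathscr{A}}{T}{\cdot}$ are ordinary intraprocedural interpretations whose semantic functions agree with $\natsem{\cdot},\abssem{\cdot}$ on non-call edges and send each \texttt{call $j$} edge to $S(j)$ and $T(j)$ respectively. These edge meanings are $\soundrel$-related by hypothesis (non-call edges) and by $\sound{S}{T}$ (call edges), so Theorem~\ref{thm:sound_abstract} applies verbatim.

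The hard part will be showing $\sound{\overline{S}}{\overline{T}}$, i.e.\ that the concrete least fixed point is approximated by the abstract post-fixed point; this is where the fixpoint machinery enters. I would iterate $F_{\mathscr{C}}$ from $\lambda i.\,0_{\mathcal{C}}$, possibly transfinitely, and prove by induction on the ordinal $\alpha$ that $\sound{F_{\mathscr{C}}^{\alpha}(\lambda i.\,0)}{\overline{T}}$. The base case uses condition~4 of Proposition~\ref{prop:ka_sr} with the empty index set, which yields $\sound{0_{\mathcal{C}}}{a}$ for every $a$. For a successor $\alpha+1$, the previous paragraph together with $\exists$-congruence give $\sound{F_{\mathscr{C}}^{\alpha+1}(\lambda i.\,0)(i)}{F_{\mathscr{A}}(\overline{T})(i)}$, where $F_{\mathscr{A}}(\overline{T})(i) = \exists LV_i.\,\iinterp{\mathscr{A}}{\overline{T}}{\pathexp{\entry_i}{\exit_i}} \leq \overline{T}(i)$ because $\overline{T}$ is an inductive (post-fixed) summary assignment; condition~3 (upward closure on the abstract side) then gives $\sound{F_{\mathscr{C}}^{\alpha+1}(\lambda i.\,0)(i)}{\overline{T}(i)}$. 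At a limit ordinal the iterate is the supremum of the earlier ones, and condition~4 (congruence for arbitrary sums) closes the case. Since the chain stabilizes at $\overline{S}$ at some ordinal, this yields $\sound{\overline{S}}{\overline{T}}$. It is essential here that the abstract summary need only be a post-fixed point, not a least one: the gap between the concrete least fixed point and the abstract (widening-inflated) post-fixed point is bridged precisely by conditions~3 and~4, and it is this asymmetry that makes Step~3 the delicate one.

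Finally I would assemble $\mathscr{C}\tuple{v}$ and $\mathscr{A}\tuple{v}$. Using $\sound{\overline{S}}{\overline{T}}$ and the congruence of the summary-indexed interpretation, each call-graph edge meaning is sound: for $P_i \to P_j$ the meaning is $\exists LV_i.\,\bigoplus\{\iinterp{\mathscr{I}}{\cdot}{\pathexp{\entry_i}{\src(e)}} : \act(e) = \texttt{call } j\}$, a finite sum of $\soundrel$-related terms, so congruence of $\plus$ and $\exists$ gives $\sound{\iinterp{\mathscr{C}}{\overline{S}}{P_i \to P_j}}{\iinterp{\mathscr{A}}{\overline{T}}{P_i \to P_j}}$. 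Treating the call graph as a flow graph with these (sound) edge meanings, Theorem~\ref{thm:sound_abstract} yields $\sound{\mathscr{C}\tuple{\entry_k}}{\mathscr{A}\tuple{\entry_k}}$ for the procedure $P_k$ containing $v$. Combining this with $\sound{\iinterp{\mathscr{C}}{\overline{S}}{\pathexp{\entry_k}{v}}}{\iinterp{\mathscr{A}}{\overline{T}}{\pathexp{\entry_k}{v}}}$ and the decomposition $\mathscr{I}\tuple{v} = \mathscr{I}\tuple{\entry_k} \mul \iinterp{\mathscr{I}}{\overline{S}}{\pathexp{\entry_k}{v}}$ (instantiated at $\mathscr{C}$ and $\mathscr{A}$), congruence of $\mul$ gives $\sound{\mathscr{C}\tuple{v}}{\mathscr{A}\tuple{v}}$, as required.
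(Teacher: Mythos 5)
Your proposal is correct and follows essentially the same route as the paper's proof: a path-expression congruence lemma (your Step~1 is the paper's Lemma~\ref{lem:pathexp_rel}), a transfinite fixpoint induction establishing $\sound{\overline{S}_{\mathcal{C}}(i)}{\overline{S}_{\mathcal{A}}(i)}$ (which the paper only sketches as ``the standard way,'' and which you correctly flesh out using conditions~3 and~4 of Proposition~\ref{prop:ka_sr} to bridge the concrete least fixed point and the abstract widening-produced post-fixed point), and finally the call-graph/intraprocedural decomposition of $\mathscr{I}\tuple{v}$ (the paper's Lemma~\ref{lem:analysis_rel}). No gaps; if anything, your write-up makes explicit steps the paper leaves implicit.
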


\section{The algebraic framework in practice}\label{sec:experiments}

In this section, we
report on the algorithmic side of our framework, using the linear recurrence
analysis presented in Section~\ref{sec:motivation} as a case study.  We
present experimental results which demonstrate two key points:
\begin{itemize}
\item Algorithm~\ref{alg:interproc}, the algorithmic basis of our
  interprocedural framework, can effectively (and efficiently) solve real
  program analysis problems in practice.
\item Compositional loop analyses (like linear recurrence analysis) can
  effectively compute accurate abstractions of loops.  This indicates that
  compositional loop analyses, a relatively under-explored family of analyses,
  are an interesting direction for future work by the program analysis
  community.  Our framework is particularly well suited for helping to explore
  this research program.
\end{itemize}



\paragraph{Implementation}
One of the stated goals of our framework is to provide a clearly defined
interface between the definition of a program analysis and the algorithm used
to compute the result of the analysis.  The practical value of this interface
is that allows for generic implementations of path expression algorithms to be
developed independent of the program analysis.  This is particularly
interesting because the choice of path expression algorithm affects both the
speed and precision of an analysis.

Our interprocedural path expression algorithm (Algorithm~\ref{alg:interproc})
is implemented in OCaml as a functor parameterized by a module representing a
\QPKA{}.  This allows any analysis which can be designed in our framework to
be implemented as a plugin.  Linear recurrence analysis is implemented as one
such plugin.  For our implementation of linear recurrence analysis, we use Z3
\cite{z3} to resolve SMT queries that result from applying the iteration
operator and checking assertion violations.

\paragraph{Linear Recurrence Analysis}
Let us refer to our implementation of the linear recurrence analysis as {\sc
  LRA}. We compare the effectiveness of {\sc LRA} in proving properties of
program with loops with two state-of-the-art invariant generation tools: {\sc
  UFO} \cite{ufo} and {\sc InvGen} \cite{invgen}. Table \ref{tab:lra} presents
the results of this comparisons.  We ran the three tools on a suite consisting
of 73 benchmarks from the {\sc InvGen} test suite and 17 benchmarks from the
software verification competition (in particular, the safe, integer-only
benchmarks from the Loops category).  These benchmarks are small, but
difficult.  
\begin{table}[htdp]
\begin{center}
\small
\begin{tabular}{|l|c|c|c|} \hline

 Result           &  {\sc LRA}    & {\sc UFO}     & {\sc InvGen}                \\ \hline \hline

Safe           &  84 &  61 &     75\\
Unsafe         &   1 &   1 &      0\\
False positive &   5 &  18 &      9\\
False negative &   0 &   0 &      1\\
Timeout        &   1 &   5 &      0\\
Crash          &   0 &   6 &      3\\
\hline
\end{tabular}

\caption{Experimental Results comparing LRA analysis with {\sc UFO} and {\sc
    InvGen} on a set of benchmarks from {\sc InvGen} and SVComp.  {\sc InvGen}
  does not currently handle procedures, so benchmarks involving procedures are
  omitted from the {\sc InvGen} column.}
\end{center}
\label{tab:lra}
\end{table}%
The total results show that {\sc LRA} manages to prove more instances correct compared to {\sc UFO} and {\sc InvGen}.

 \section{Conclusion and Related work} \label{sec:rel_work}
 In this paper, we presented an algebraic framework for compositional program
 analysis.  Our framework can be used as the basis for analyses that use
 non-iterative methods of computing loop invariants and for efficient
 interprocedural analyses, even in the presence of recursion and local
 variables.  We close with a comparison of our framework with previous
 approaches to compositional analysis, uses of algebra in program analysis,
 and interprocedural analysis.

 \subsection{Compositional program analyses}
 Perhaps the most prominent use of compositionality is in interprocedural
 analyses based procedure summarization.  Summary-based analyses are not
 necessarily implementable in our framework since there are iterative methods
 for computing summaries that are beyond the scope of our framework.  The
 literature on this subject is too vast to list here, but we note one
 particular example which \emph{can} be implemented and proved correct in our
 framework: the affine equalities analysis of \cite{Muller-Olm2004}.

 Compositional loop analyses use a summary of the body of the loop in order to
 compute loop invariants.  The most common technique for presenting these
 analyses is by structural induction on program syntax
 \cite{Ammarguellat1990,Popeea2006,Biallas2012}.  An alternative technique is
 given in \cite{Kroening2008}, which is based on graph rewriting.  This
 technique is more generally applicable than induction on program syntax
 (e.g., it may be applied to programs with \texttt{goto}s), but is
 considerably more complicated: an analysis designer must develop a procedure
 which abstracts arbitrary loop-free programs (rather than a handful of simple
 syntactic constructors).  Our framework allows all of these analyses to be
 implemented in a way that is both simple and generally applicable.

 Elimination-style dataflow analyses are another source for compositional
 program analyses \cite{Ryder1986}.  Elimination-style dataflow analyses were
 designed to speed up iterative dataflow analyses by computing loop summaries
 and thereby avoid repeatedly propagating dataflow values through loops.  This
 style of analysis, the context in which Tarjan's path expression algorithm
 was originally developed \cite{Tarjan1981}.  The analyses considered in this
 line of work typically limited to the class of dataflow analyses useful in
 compiler optimization (e.g., gen/kill analyses).  These analyses can be
 implemented and proved correct in our framework, but we are primarily
 interested in a more general class of analysis (e.g., analyses for numerical
 invariant generation).

 \subsection{Algebra in program analysis}
 A number of papers have used algebraic structures as the basis of program
 analysis \cite{Reps2005,Kot2004,Bouajjani2003,Filipiuk2011}.  A summary of
 the relative strength of the assumptions on these structures is  presented in
 Figure~\ref{fig:strength}.  The notion of Pre-Kleene algebras introduced in
 this paper generalize all these structures, and thus provides a unifying
 foundation.

 Weighted pushdown-systems (WPDSs) are a generic tool for implementing
 interprocedural program analyses \cite{Reps2005}.  Weighted pushdown systems
 extend pushdown systems with a weight on each rule that is drawn from a
 bounded idempotent semiring (a Kleene algebra without an iteration operator
 that must satisfy the ascending chain condition).  Tarjan's path-expression
 algorithm has also found a use in the WPDS method in improving the
 $pre^*/post^*$ algorithms that drive WPDS-based analyses \cite{Lal2006}.

 There are two advantages of our framework over WPDSs. First, we compute loop
 summaries using an iteration operator, whereas WPDSs use fixed-point
 iteration.  The advantage of our approach using an iteration operator was
 also noted in \cite{Lal2006}.  The second advantage is conceptual simplicity:
 our framework is based on familiar regular expression operations and
 procedure summaries, whereas WPDSs require more sophisticated automata
 theory.

 There are two features of WPDSs that are \emph{not} currently handled by our
 approach: backwards analysis and stack-qualified queries.  Given the
 similarity between the WPDS framework and the interprocedural path-expression
 algorithm, we believe that our methodology could be adapted to handle these
 features as well.

 Bouajjani et al present a generic methodology for proving that two program
 locations are not coreachable in a concurrent program with procedures
 \cite{Bouajjani2003}.  Their method is based on developing a Kleene
 algebra\footnote{In fact, they are interested in two stronger forms: finite
   and commutative Kleene algebras.  Finite \KA{}s are quantales, but
   commutative \KA{}s are incomparable.} in which elements represent regular
 sets of paths; this allows the coreachability test to be reduces to
 emptiness-of-intersection of two regular languages.  Their method allows for
 significant flexibility in designing the abstract domain, but ultimately
 their work does not intend to be a completely generic framework, but rather
 to solve a specific problem in concurrent program analysis. 

 Kot \& Kozen address develop an algorithm for implementing second-order
 (i.e., trace-based) program analyses based on left-handed Kleene algebra
 \cite{Kot2004} (Kleene-algebras where sequencing is left distributive and
 right pre-distributive).  Their work is an implementation of a
 path-expression algorithm, which provides an alternative to the ones in
 \cite{Tarjan1981,Scholz2007} (and which, interestingly, uses a matrix
 construction on left-handed Kleene algebras rather than explicitly using
 graphs).  Their primary concern is with the implementation of analyses,
 rather than semantics-based justification of their correctness.

 Filipiuk et al present a program analysis framework most similar in spirit to
 ours, in which program analyses are defined by \emph{flow algebras}:
 pre-distributive, bounded, idempotent semirings (i.e., Pre-Kleene algebras
 without an iteration operator) \cite{Filipiuk2011}.  Of the works cited
 above, \cite{Filipiuk2011} is the only one which addresses the problem of
 proving the correctness of a program analysis with respect to a concrete
 semantics.  Since flow algebras omit an iteration operator, they cannot
 express the non-iterative loop abstractions that \PKA{}s can.

 \paragraph{Algebra in program semantics.}
 There is a line of work in program semantics which aims to enrich Kleene
 algebras with greater structure to make them useful as a basis for reasoning
 about programs: for example, \cite{Kozen2003,Hoare2009}. In contrast, we use
 a weaker structure than Kleene algebras (at least in terms of their
 axiomatization), but we typically assume that the operations of interest are
 computationally effective so that we may \emph{compute} the meaning of a
 program.  Work on concurrent Kleene algebra (with an additional parallel
 composition operator) \cite{Hoare2009} suggests an interesting direction for
 future research, e.g. adapting our framework to handle concurrency.

 \begin{figure}
   \begin{tikzpicture}
     \footnotesize
     \node (q) {Quantale};
     \node [below of=q] (ka) {Kleene algebra \cite{Bouajjani2003}};
     \node [below left of=ka,xshift=-1.25cm,yshift=-0.5cm] (lka) {Left-handed Kleene algebra \cite{Kot2004}};
     \node [below right of=ka,xshift=1.25cm] (bis) {Bounded idempotent semiring \cite{Reps2005}};
     \node [below of=bis] (fa) {Flow algebra \cite{Filipiuk2011}};
     \node [below of=ka,yshift=-1.5cm] (pka) {Pre-Kleene algebra};

     \path [draw] (q) -- (ka);
     \path [draw] (ka) -- (lka);
     \path [draw] (ka) -- (bis);
     \path [draw] (bis) -- (fa);
     \path [draw] (fa) -- (pka);
     \path [draw] (lka) -- (pka);
   \end{tikzpicture}   
   \caption{Relative strength of program analysis algebras}
   \label{fig:strength}
 \end{figure}
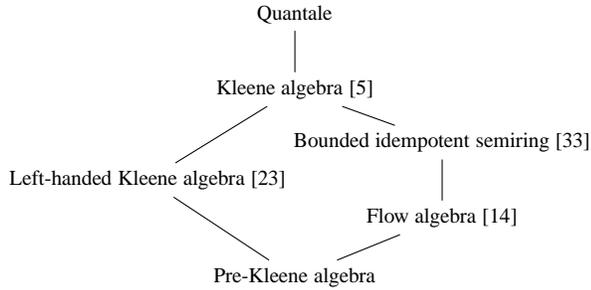

 \subsection{Interprocedural analysis}



 The are a great number of approaches to interprocedural analysis; a nice
 categorization of these techniques can be found in \cite{Cousot2002}.  Two
 particular techniques of interest is the summary-based approach of Sharir \&
 Pnueli \cite{Sharir1981} and the method of Cousot \& Cousot for analyzing
 (recursive) procedures \cite{Cousot1977b}.  Both of these techniques assume
 that the domain for procedure summaries is a function space over some
 first-order domain.  Our approach  makes no such assumptions on the
 structure of the universe.

 \paragraph{Local variables.}
 Knoop \& Steffan extend the coincidence theorem of \cite{Sharir1981} to
 handle local variables by using a merge function to combine information about
 local variables from the calling context with information about global
 variables from the returning context \cite{Knoop1992b}.  Lal et al. use a
 similar approach to handle local variables in WPDS-based analyses \cite{Lal2005}. Cousot \&
 Cousot use a different approach to handling local variables by explicitly
 managing the footprint and frame of procedure summaries \cite{Cousot1977b};
 this is similar to the approach we use.  In contrast to \cite{Cousot1977b},
 our treatment of quantification is algebraic, whereas in \cite{Cousot1977b}
 moving a variable from the footprint to the frame is a particular concrete
 operation.

\bibliographystyle{abbrvnat}
\bibliography{paper}

\begin{thebibliography}{39}
\providecommand{\natexlab}[1]{#1}
\providecommand{\url}[1]{\texttt{#1}}
\expandafter\ifx\csname urlstyle\endcsname\relax
  \providecommand{\doi}[1]{doi: #1}\else
  \providecommand{\doi}{doi: \begingroup \urlstyle{rm}\Url}\fi

\bibitem[ful()]{fullversion}
Removed for blind reviewing. The results referenced in the paper are available
  in the appendix supplied in the supplementary material.

\bibitem[Albarghouthi et~al.(2012)Albarghouthi, Li, Gurfinkel, and
  Chechik]{ufo}
A.~Albarghouthi, Y.~Li, A.~Gurfinkel, and M.~Chechik.
\newblock Ufo: a framework for abstraction and interpolation-based software
  verification.
\newblock CAV, pages 672--678, 2012.

\bibitem[Ammarguellat and Harrison(1990)]{Ammarguellat1990}
Z.~Ammarguellat and W.~L. Harrison, III.
\newblock Automatic recognition of induction variables and recurrence relations
  by abstract interpretation.
\newblock PLDI, pages 283--295, 1990.

\bibitem[Biallas et~al.(2012)Biallas, Brauer, King, and
  Kowalewski]{Biallas2012}
S.~Biallas, J.~Brauer, A.~King, and S.~Kowalewski.
\newblock Loop leaping with closures.
\newblock In \emph{SAS}, pages 214--230, 2012.

\bibitem[Bouajjani et~al.(2003)Bouajjani, Esparza, and Touili]{Bouajjani2003}
A.~Bouajjani, J.~Esparza, and T.~Touili.
\newblock A generic approach to the static analysis of concurrent programs with
  procedures.
\newblock POPL, pages 62--73, 2003.

\bibitem[Bourdoncle(1993)]{Bourdoncle1993}
F.~Bourdoncle.
\newblock Efficient chaotic iteration strategies with widenings.
\newblock In \emph{Formal Methods in Programming and Their Applications}, pages
  128--141, 1993.

\bibitem[Cousot and Cousot(1977{\natexlab{a}})]{Cousot1977}
P.~Cousot and R.~Cousot.
\newblock Abstract interpretation: a unified lattice model for static analysis
  of programs by construction or approximation of fixpoints.
\newblock In \emph{POPL}, pages 238--252, 1977{\natexlab{a}}.

\bibitem[Cousot and Cousot(1977{\natexlab{b}})]{Cousot1977b}
P.~Cousot and R.~Cousot.
\newblock Static determination of dynamic properties of recursive procedures.
\newblock In E.~Neuhold, editor, \emph{IFIP Conf{.} on Formal Description of
  Programming Concepts, St-Andrews, N.B., CA}, pages 237--277. North-Holland,
  1977{\natexlab{b}}.

\bibitem[Cousot and Cousot(1979)]{Cousot1979}
P.~Cousot and R.~Cousot.
\newblock Systematic design of program analysis frameworks.
\newblock In \emph{POPL}, pages 269--282, 1979.

\bibitem[Cousot and Cousot(1992)]{Cousot1992}
P.~Cousot and R.~Cousot.
\newblock Abstract interpretation frameworks.
\newblock \emph{Journal of Logic and Computation}, 2\penalty0 (4):\penalty0
  511--547, 1992.

\bibitem[Cousot and Cousot(2002)]{Cousot2002}
P.~Cousot and R.~Cousot.
\newblock Modular static program analysis.
\newblock In \emph{Compiler Construction}, pages 159--179. 2002.

\bibitem[Cousot and Halbwachs(1978)]{Cousot1978}
P.~Cousot and N.~Halbwachs.
\newblock Automatic discovery of linear restraints among variables of a
  program.
\newblock In \emph{POPL}, pages 84--96, 1978.

\bibitem[De~Moura and Bj{\o}rner(2008)]{z3}
L.~De~Moura and N.~Bj{\o}rner.
\newblock Z3: an efficient smt solver.
\newblock TACAS, pages 337--340, 2008.

\bibitem[Filipiuk et~al.(2011)Filipiuk, Terepeta, Nielson, and
  Nielson]{Filipiuk2011}
P.~Filipiuk, M.~Terepeta, H.~Nielson, and F.~Nielson.
\newblock Galois connections for flow algebras.
\newblock In \emph{Formal Techniques for Distributed Systems}, pages 138--152.
  2011.

\bibitem[Gupta and Rybalchenko(2009)]{invgen}
A.~Gupta and A.~Rybalchenko.
\newblock Invgen: An efficient invariant generator.
\newblock CAV, pages 634--640, 2009.

\bibitem[Halmos(1962)]{Halmos1962}
P.~Halmos.
\newblock \emph{Algebraic Logic}.
\newblock Chelsea Publishing Co., 1962.

\bibitem[Henkin et~al.(1971)Henkin, Monk, and Tarski]{Henkin1971}
L.~Henkin, J.~Monk, and A.~Tarski.
\newblock \emph{Cylindric Algebras, Part I}.
\newblock North-Holland, 1971.
\newblock ISBN 978-0-7204-2043-2.

\bibitem[Hoare et~al.(2009)Hoare, M\"{o}ller, Struth, and Wehrman]{Hoare2009}
C.~A. Hoare, B.~M\"{o}ller, G.~Struth, and I.~Wehrman.
\newblock Concurrent kleene algebra.
\newblock CONCUR, pages 399--414, 2009.

\bibitem[Kam and Ullman(1977)]{Kam1977}
J.~B. Kam and J.~D. Ullman.
\newblock Monotone data flow analysis frameworks.
\newblock \emph{Acta Informatica}, 7:\penalty0 305--317, 1977.

\bibitem[Kildall(1973)]{Kildall1973}
G.~A. Kildall.
\newblock A unified approach to global program optimization.
\newblock In \emph{POPL}, pages 194--206, 1973.

\bibitem[Kleene(1956)]{Kleene1956}
S.~Kleene.
\newblock {Representation of Events in Nerve Nets and Finite Automata}.
\newblock In C.~Shannon and J.~Mccarthy, editors, \emph{Automata Studies},
  pages 3--42. Princeton University Press, Princeton, N.J., 1956.

\bibitem[Knoop and Steffen(1992)]{Knoop1992b}
J.~Knoop and B.~Steffen.
\newblock The interprocedural coincidence theorem.
\newblock In \emph{CC}, pages 125--140. 1992.

\bibitem[Kot and Kozen(2004)]{Kot2004}
{\L}.~Kot and D.~Kozen.
\newblock Second-order abstract interpretation via {K}leene algebra.
\newblock Technical report, Cornell University, 2004.

\bibitem[Kozen(1991)]{Kozen1991}
D.~Kozen.
\newblock A completeness theorem for kleene algebras and the algebra of regular
  events.
\newblock In \emph{LICS}, pages 214 --225, July 1991.

\bibitem[Kozen(2003)]{Kozen2003}
D.~Kozen.
\newblock Kleene algebra with tests and the static analysis of programs.
\newblock Technical report, Cornell University, 2003.

\bibitem[Kroening et~al.(2008)Kroening, Sharygina, Tonetta, Tsitovich, and
  Wintersteiger]{Kroening2008}
D.~Kroening, N.~Sharygina, S.~Tonetta, A.~Tsitovich, and C.~Wintersteiger.
\newblock Loop summarization using abstract transformers.
\newblock In \emph{ATVA}, pages 111--125. 2008.

\bibitem[Lal and Reps(2006)]{Lal2006}
A.~Lal and T.~Reps.
\newblock Improving pushdown system model checking.
\newblock In \emph{CAV}, pages 343--357, 2006.

\bibitem[Lal et~al.(2005)Lal, Reps, and Balakrishnan]{Lal2005}
A.~Lal, T.~Reps, and G.~Balakrishnan.
\newblock Extended weighted pushdown systems.
\newblock In \emph{CAV}, pages 467--478. 2005.

\bibitem[M\"{u}ller-Olm and Seidl(2004)]{Muller-Olm2004}
M.~M\"{u}ller-Olm and H.~Seidl.
\newblock Precise interprocedural analysis through linear algebra.
\newblock POPL, pages 330--341, 2004.

\bibitem[Mulvey(1986)]{Mulvey1986}
C.~Mulvey.
\newblock \&.
\newblock \emph{Rend. Circ. Mat. Palermo}, 12:\penalty0 99--104, 1986.

\bibitem[Popeea and Chin(2007)]{Popeea2006}
C.~Popeea and W.-N. Chin.
\newblock Inferring disjunctive postconditions.
\newblock ASIAN, pages 331--345, 2007.

\bibitem[Reps et~al.(1995)Reps, Horwitz, and Sagiv]{Reps1995}
T.~Reps, S.~Horwitz, and M.~Sagiv.
\newblock Precise interprocedural dataflow analysis via graph reachability.
\newblock In \emph{POPL}, pages 49--61, 1995.

\bibitem[Reps et~al.(2005)Reps, Schwoon, Jha, and Melski]{Reps2005}
T.~Reps, S.~Schwoon, S.~Jha, and D.~Melski.
\newblock Weighted pushdown systems and their application to interprocedural
  dataflow analysis.
\newblock \emph{Sci. Comput. Program.}, 58\penalty0 (1-2):\penalty0 206--263,
  Oct. 2005.

\bibitem[Ryder and Paull(1986)]{Ryder1986}
B.~G. Ryder and M.~C. Paull.
\newblock Elimination algorithms for data flow analysis.
\newblock \emph{ACM Comput. Surv.}, 18\penalty0 (3):\penalty0 277--316, Sept.
  1986.

\bibitem[Scholz and Blieberger(2007)]{Scholz2007}
B.~Scholz and J.~Blieberger.
\newblock A new elimination-based data flow analysis framework using annotated
  decomposition trees.
\newblock In \emph{CC}, pages 202--217, 2007.

\bibitem[Sharir and Pnueli(1981)]{Sharir1981}
M.~Sharir and A.~Pnueli.
\newblock \emph{{Two approaches to interprocedural data flow analysis}},
  chapter~7, pages 189--234.
\newblock Prentice-Hall, Englewood Cliffs, NJ, 1981.

\bibitem[Tarjan(1981{\natexlab{a}})]{Tarjan1981}
R.~E. Tarjan.
\newblock Fast algorithms for solving path problems.
\newblock \emph{J. ACM}, 28\penalty0 (3):\penalty0 594--614, July
  1981{\natexlab{a}}.

\bibitem[Tarjan(1981{\natexlab{b}})]{Tarjan1981b}
R.~E. Tarjan.
\newblock A unified approach to path problems.
\newblock \emph{J. ACM}, 28\penalty0 (3):\penalty0 577--593, July
  1981{\natexlab{b}}.

\bibitem[Tarski(1955)]{Tarski1955}
A.~Tarski.
\newblock A lattice-theoretical fixpoint theorem and its applications.
\newblock \emph{Pacific journal of Mathematics}, 5\penalty0 (2):\penalty0
  285--309, 1955.

\end{thebibliography}


\appendix

 \section{Proof of Theorem~\ref{thm:sound_abstraction}}
 \begin{proof}
   We begin with two general lemmas concerning interprocedural
   interpretations.  The main result is then a simple application of these two
   lemmas.
   \begin{lemma} \label{lem:pathexp_rel}
     Let $\mathscr{I} = \tuple{\mathcal{I}, \sem{\cdot}_{\mathcal{I}}}$ and
     $\mathscr{J} = \tuple{\mathcal{J},\sem{\cdot}_{\mathcal{J}}}$ be \QPKA{}
     interpretations, let $S_{\mathcal{I}}$ and $S_{\mathcal{J}}$ be summary
     assignments, and let $\soundrel \subseteq \mathcal{I} \times \mathcal{J}$ be
     any relation that is coherent with respect to the \QPKA{} operators and
     such that for all $e \in E$, 
     \[ \sound{\mathscr{I}(S_{\mathcal{I}})[e]}{\mathscr{J}(S_\mathcal{J})[e]} \]
     Then for any path expression $p$,
     \[ \sound{\mathscr{I}(S_{\mathcal{I}})[p]}{\mathscr{J}(S_\mathcal{J})[p]} \]

     Moreover, this holds even if $\mathcal{I}$ and $\mathcal{J}$ do not
     satisfy any of the \QPKA{} axioms.
   \end{lemma}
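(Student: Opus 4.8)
The plan is to prove the statement by structural induction on the path expression $p$, exactly mirroring the proof of the intraprocedural abstraction theorem (Theorem~\ref{thm:sound_abstract}). The crucial observation is that $\iinterp{\mathscr{I}}{S_{\mathcal{I}}}{p}$ is defined by recursion on $p$ using only the regular operators $0$, $1$, $\plus$, $\mul$, and $\closure$; consequently the summary assignments $S_{\mathcal{I}}$ and $S_{\mathcal{J}}$ enter only at the leaves of the recursion, and the only facts I need about $\soundrel$ are its coherence clauses for these five operators.

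For the base cases, the edge case $p = e$ is handled directly by the hypothesis, which asserts $\sound{\iinterp{\mathscr{I}}{S_{\mathcal{I}}}{e}}{\iinterp{\mathscr{J}}{S_{\mathcal{J}}}{e}}$ for every $e \in E$; note that this single hypothesis uniformly covers both ordinary edges (interpreted via the semantic function) and \texttt{call} edges (interpreted via the summary assignments). For $p = \emptyset$ and $p = \epsilon$, I would unfold the definition of the interpretation to the constants $0$ and $1$ respectively, and conclude from the coherence of $\soundrel$ at these constants.

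For the inductive cases each operator is dispatched uniformly: apply the induction hypothesis to the immediate subexpression(s) and then invoke the matching coherence clause. In the choice case $p = q + r$, the induction hypothesis gives soundness for the two subinterpretations and coherence of $\plus$ lifts this to their join; the sequencing case $p = q \cdot r$ is identical with $\mul$ in place of $\plus$; and the iteration case $p = q^\closure$ follows by applying coherence of $\closure$ to the single subexpression $q$. Since a path expression never contains the quantification operator $\exists$ (nor the widening operator), the corresponding components of coherence are simply not exercised.

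I do not anticipate any genuine obstacle, as every step is forced by the recursive definition together with the appropriate coherence clause. The one point worth emphasizing is that the argument never appeals to any relationship \emph{among} the operators --- only to the fact that $\soundrel$ respects each operator in isolation. This is exactly what underwrites the ``moreover'' clause: because no \QPKA{} identity is used, the conclusion persists even when $\mathcal{I}$ and $\mathcal{J}$ fail to satisfy the \QPKA{} axioms.
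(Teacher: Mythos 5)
Your proposal is correct and takes exactly the paper's approach: the paper's own proof of Lemma~\ref{lem:pathexp_rel} is simply ``by induction on $p$,'' which your argument fleshes out in full, correctly observing that the summary assignments enter only at the edge-leaves (so the edge hypothesis covers \texttt{call} edges uniformly) and that only the coherence clauses of $\soundrel$ --- never the \QPKA{} axioms --- are invoked, which justifies the ``moreover'' clause.
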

   \begin{proof}
     By induction on $p$.
   \end{proof}
   \begin{lemma} \label{lem:analysis_rel}
     Let $\mathscr{I} = \tuple{\mathcal{I}, \sem{\cdot}_{\mathcal{I}}}$ and
     $\mathscr{J} = \tuple{\mathcal{J},\sem{\cdot}_{\mathcal{J}}}$ be \QPKA{}
     interpretations, and let $\soundrel \subseteq \mathcal{I}
     \times \mathcal{J}$ be any relation that is coherent with respect to the
     \QPKA{} operators and such that for all $e \in E$,
     \[ \sound{\mathscr{I}(\overline{S}_{\mathcal{I}})[e]}{\mathscr{J}(\overline{S}_\mathcal{J})[e]} \]
     where $\overline{S}_{\mathcal{I}}$ and $\overline{S}_{\mathcal{J}}$ are
     inductive summary assignments (as defined in
     Section~\ref{sec:interproc}). Then for any vertex $v$,
     \[ \sound{\mathscr{I}\tuple{v}}{\mathscr{J}\tuple{v}} \]

     Moreover, this holds even if $\mathcal{I}$ and $\mathcal{J}$ do not
     satisfy any of the \QPKA{} axioms.
   \end{lemma}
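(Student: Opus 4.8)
The plan is to observe that the hypothesis already supplies the essential ingredient—soundness of the procedure summaries—and then to propagate that soundness through the two levels of the interprocedural path-expression algorithm (the intraprocedural flow graphs and the call graph), appealing only to the congruence of $\soundrel$. Concretely, for a call edge $e$ with $\act(e) = \texttt{call } j$ we have $\iinterp{\mathscr{I}}{\overline{S}_{\mathcal{I}}}{e} = \overline{S}_{\mathcal{I}}(j)$ and likewise for $\mathscr{J}$, so the hypothesis $\sound{\iinterp{\mathscr{I}}{\overline{S}_{\mathcal{I}}}{e}}{\iinterp{\mathscr{J}}{\overline{S}_{\mathcal{J}}}{e}}$ is precisely $\sound{\overline{S}_{\mathcal{I}}(j)}{\overline{S}_{\mathcal{J}}(j)}$ for every called procedure $j$ (and for non-call edges it is the edge-soundness condition $\sound{\natsem{e}}{\abssem{e}}$). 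Recalling that for a vertex $v$ of procedure $P_k$ the algorithm computes
\[ \mathscr{I}\tuple{v} = \mathscr{I}\tuple{\entry_k} \mul \iinterp{\mathscr{I}}{\overline{S}_{\mathcal{I}}}{\pathexp{\entry_k}{v}}, \]
it suffices to establish soundness of the two factors separately and then appeal to congruence of $\mul$.

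For the second factor I would invoke Lemma~\ref{lem:pathexp_rel} directly: its hypothesis (soundness of every $\iinterp{\mathscr{I}}{\overline{S}_{\mathcal{I}}}{e}$) is exactly what we just extracted, so $\sound{\iinterp{\mathscr{I}}{\overline{S}_{\mathcal{I}}}{\pathexp{\entry_k}{v}}}{\iinterp{\mathscr{J}}{\overline{S}_{\mathcal{J}}}{\pathexp{\entry_k}{v}}}$ follows for any intraprocedural path expression. The first factor $\mathscr{I}\tuple{\entry_k} = \iinterp{\mathscr{I}}{\overline{S}_{\mathcal{I}}}{\pathexp{P_1}{P_k}}$ is obtained by running a path-expression algorithm over the \emph{call graph}, so I first need soundness of the call-graph edge interpretations. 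For a call-graph edge $P_i \rightarrow P_j$,
\[ \iinterp{\mathscr{I}}{\overline{S}_{\mathcal{I}}}{P_i \rightarrow P_j} = \exists LV_i. \bigoplus \{ \iinterp{\mathscr{I}}{\overline{S}_{\mathcal{I}}}{\pathexp{\entry_i}{\src(e)}} : e \in E_i, \act(e) = \texttt{call } j \}. \]
Each summand is sound by Lemma~\ref{lem:pathexp_rel} again; since $E_i$ is finite the sum is a finite iteration of $\plus$ (with $\sound{0_{\mathcal{I}}}{0_{\mathcal{J}}}$ as base), so congruence of $\plus$ gives soundness of the sum; and since $LV_i$ is finite, $\exists LV_i$ is a finite block of quantifiers, so congruence of $\exists$ gives soundness of the whole call-graph edge interpretation. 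With the call-graph atoms now sound, a second application of the inductive argument of Lemma~\ref{lem:pathexp_rel}—this time with the call graph in the role of the flow graph and call-graph edges in the role of atomic edges—yields $\sound{\mathscr{I}\tuple{\entry_k}}{\mathscr{J}\tuple{\entry_k}}$ for every procedure $P_k$. Note that the degenerate cases (the entry of the main procedure, interpreted as $1$, and unreachable procedures, interpreted as $0$) are handled automatically, since the $\epsilon$ and $\emptyset$ cases are already covered by the induction of Lemma~\ref{lem:pathexp_rel} together with $\sound{1_{\mathcal{I}}}{1_{\mathcal{J}}}$ and $\sound{0_{\mathcal{I}}}{0_{\mathcal{J}}}$.

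Combining the two factors by congruence of $\mul$ gives $\sound{\mathscr{I}\tuple{v}}{\mathscr{J}\tuple{v}}$, which is the claim; the ``moreover'' clause is immediate, because the entire argument uses only the congruence of $\soundrel$ under the \QPKA{} operators and never any of the \QPKA{} axioms (exactly as in Lemma~\ref{lem:pathexp_rel}). The main obstacle I anticipate is the \emph{two-level} use of the path-expression lemma: one must be careful that Lemma~\ref{lem:pathexp_rel}, although stated for path expressions over the program edges $E$, is genuinely a statement about any alphabet whose atoms are pairwise related by $\soundrel$, and so applies verbatim to the call-graph alphabet once the call-graph edge interpretations are shown sound. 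The only truly new work beyond Lemma~\ref{lem:pathexp_rel} is therefore the soundness of those call-graph edge interpretations, which reduces—via the finiteness of $LV_i$ and $E_i$—the block quantifier $\exists LV_i$ and the sum over call sites to the unary and binary congruence conditions on $\soundrel$.
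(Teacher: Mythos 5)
Your proposal is correct and follows essentially the same route as the paper's own proof: decompose $\mathscr{I}\tuple{v}$ into the call-graph factor and the intraprocedural factor, apply Lemma~\ref{lem:pathexp_rel} to the latter, establish soundness of the call-graph edge interpretations (via Lemma~\ref{lem:pathexp_rel} plus congruence of $\plus$ and $\exists$), rerun the path-expression induction over the call graph, and finish with congruence of $\mul$. Your treatment is in fact somewhat more explicit than the paper's—spelling out the finite sum, the quantifier block, and the re-use of the induction over the call-graph alphabet, which the paper compresses into ``It follows from Lemma~\ref{lem:pathexp_rel}'' and ``we may then prove by induction on $\pathexp{P_1}{P_k}$.''
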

   \begin{proof}
     This is essentially a corollary of Lemma~\ref{lem:pathexp_rel}.  Recall
     that
     \[ \mathscr{I}\tuple{v} = \mathscr{I}(\overline{S}_{\mathcal{I}})[\pathexp{P_1}{P_k}] \mul_\mathcal{I} \mathscr{I}(\overline{S}_{\mathcal{I}})[\pathexp{\entry_k}{v}] \]
     \[ \mathscr{J}\tuple{v} = \mathscr{J}(\overline{S}_{\mathcal{J}})[\pathexp{P_1}{P_k}] \mul_\mathcal{J} \mathscr{J}(\overline{S}_{\mathcal{J}})[\pathexp{\entry_k}{v}] \]
   So we need only to prove that
   \[ \sound{\mathscr{I}(\overline{S}_{\mathcal{I}})[\pathexp{P_1}{P_k}]}{\mathscr{J}(\overline{S}_{\mathcal{J}})[\pathexp{P_1}{P_k}]} \]
   and 
   \[ \sound{\mathscr{I}(\overline{S}_{\mathcal{I}})[\pathexp{\entry_k}{v}]}{\mathscr{J}(\overline{S}_{\mathcal{J}})[\pathexp{\entry_k}{v}]} \]
   The latter follows directly from Lemma~\ref{lem:pathexp_rel}.  It remains
   to show the former.

   Recall that the interpretation of a call graph edge $P_i \rightarrow P_j$
   is defined in terms of the interpretations of the path expression
   $\pathexp{\entry_i}{c}$ describing paths to a vertex $c$ that calls $j$.
   It follows from Lemma~\ref{lem:pathexp_rel} that we must have
   \[ \sound{\mathscr{I}(\overline{S}_{\mathcal{I}})[P_i \rightarrow P_j]}{\mathscr{J}(\overline{S}_{\mathcal{J}})[P_i \rightarrow P_j]} \]
   for all edges $P_i \rightarrow P_j$ in the call graph.  We may then prove by
   induction on $\pathexp{P_1}{P_k}$ that
   \[ \sound{\mathscr{I}(\overline{S}_{\mathcal{I}})[\pathexp{P_1}{P_k}]}{\mathscr{J}(\overline{S}_{\mathcal{J}})[\pathexp{P_1}{P_k}]} \]
   \end{proof}
   We can prove that for all $i$,
   $\sound{\overline{S}_{\mathcal{C}}(i)}{\overline{S}_\mathcal{A}(i)}$ in the
   standard way, by induction on the transfinite iteration sequence defining
   $\overline{S}_{\mathcal{C}}$ (using Lemma~\ref{lem:pathexp_rel} for the
   induction step).  We then have the main result by applying
   Lemma~\ref{lem:analysis_rel}.
 \end{proof}

\section{Coincidence}

 We now consider the question of when the algorithm presented in
 Section~\ref{sec:inter_analysis} computes the ideal (merge-over-paths)
 solution to an analysis problem.  This is analogous to the development of
 \emph{coincidence theorems} \cite{Kam1977,Sharir1981,Knoop1992b,Lal2005} in
 the field of dataflow analysis, which state conditions under which the
 iterative algorithm for solving dataflow analyses computes the ideal
 solutions (i.e., when the join-over-paths solution to a dataflow analysis
 problem coincides with the minimum fixpoint solution\footnote{In dataflow
   analysis literature, coincidence theorems typically relate the
   meet-over-paths and the maximum fixpoint solutions.  We follow in the
   tradition of abstract interpretation, which uses the dual order.}).  In this
 section, we formulate and prove a coincidence theorem for our interprocedural
 framework.

 The first step towards formulating our coincidence theorem is to define the
 join-over-paths solution to an analysis.  This requires us to define what a
 path is, and how to interpret it.  Our treatment of interprocedural paths in
 Section~\ref{sec:inter_analysis} is not adequate for this purpose, because it
 relies on interpreting call edges using procedure summaries (which represent
 a set of paths, rather than a single path).  In this section, we will take
 interprocedural paths to be words over an alphabet of control flow graph
 edges such that each return is properly matched with a call.  For a formal
 definition of such a path, the reader may consult \cite{Sharir1981}.  We use
 \textsf{IPaths(v)} to denote the set of all interprocedural paths to $v$.

 Now we must define the interpretation of an interprocedural path.  Let
 $\mathscr{I} = \langle \mathcal{D}, \sem{\cdot} \rangle$ be an
 interpretation.  We define the interpretation of a path within $\mathscr{I}$
 as a function that operates on \emph{stacks} of abstract values (as in
 \cite{Knoop1992b,Lal2005}).  We use the notation $[\langle V_1, a_1 \rangle,
   \ldots, \langle V_n, a_n \rangle]$ to denote a stack of $n$
 \emph{activation records}, where an activation record is a tuple $\langle
 V_i, a_i \rangle$ consisting of a set of variables $V_i$ that are local to
 that activation record, and an abstract value $a_i$.  The first entry in this
 list corresponds to the top of the stack.  We define our semantic function on
 edges so that each call edge corresponds to a push on this stack, each return
 edge corresponds to a pop, and each intraprocedural edge modifies the top of
 the stack (i.e., the ``current'' activation record).  Formally, we have

  \noindent$\widehat{\sem{e}}([\langle V_1, a_1 \rangle, \ldots, \langle V_n, a_n \rangle]) =$

   \hfill$\begin{cases}
     [\langle LV_i, 1 \rangle,\langle V_1, a_1 \rangle,\ldots,\langle V_n, a_n \rangle] & \text{if } \act(e) = \texttt{call } i\\
     [\langle V_2, a_2 \mul (\exists V_1.a_1) \rangle, \ldots, \langle V_n, a_n \rangle] & \text{if } \act(e) = \texttt{return}\\
     [\langle V_1, a_1 \mul \sem{e} \rangle, \ldots, \langle V_n, a_n \rangle] & \text{otherwise}
   \end{cases}$

   We may then associate with every path $\pi$ a stack of activation records
   $S(\pi)$ in the following way:
   \begin{eqnarray*}
     \textit{stack}(\epsilon) &=& [\langle LV_1, 1 \rangle]\\
     \textit{stack}(\pi a) &=& \widehat{\sem{a}}(\textit{stack}(\pi))
   \end{eqnarray*}

   Next, we define a $\textit{flatten}$ function that lowers a stack of
   activation records into a single abstract value, and existentially
   quantifies variables that are not in the scope of the current activation
   record.
   \[\textit{flatten}([\langle V_1, a_1 \rangle, \ldots, \langle V_n,
     a_n \rangle]) = (\exists V_n. a_n)\mul \dotsi \mul (\exists V_2. a_2)
   \mul a_1\]

   Finally, we take the interpretation of an interprocedural path within
   $\mathscr{I}$ to be $\textit{flatten}(\textit{stack}(\pi))$.

   Recalling the definition of $\mathscr{I}\langle v \rangle$ from
   Section~\ref{sec:inter_analysis}, we may state our coincidence theorem:
   \begin{theorem}[Coincidence] \label{thm:coincidence}
     Let $\mathscr{I} = \langle \mathcal{Q}, \sem{\cdot} \rangle$ be an
     interpretation, where $\mathcal{Q}$ is a quantale.  Then for any $v$,
     \begin{equation*}
       \mathscr{I}\langle v \rangle = \bigoplus_{\pi \in \mathsf{IPaths}(v)} \textit{flatten}(\textit{stack}(\pi))
     \end{equation*}
   \end{theorem}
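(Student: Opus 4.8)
The plan is to exploit the fact that $\mathcal{Q}$ is a quantale to reduce everything to joins over paths: since $\plus$, $\mul$, $\closure$, and $\exists$ all distribute over arbitrary joins (by the quantale axioms and (Q1')), the summary fixpoint $\overline{S}$ is computable by Kleene iteration, and joins can be commuted through every composition. I would therefore prove the equality directly, by expressing the algorithmically computed value $\mathscr{I}\langle v\rangle$ and the right-hand side $\bigoplus_{\pi \in \mathsf{IPaths}(v)} \textit{flatten}(\textit{stack}(\pi))$ as joins over the \emph{same} index set of interprocedural paths. Three ingredients are needed: a fixpoint characterization of $\overline{S}$ in terms of complete paths, a compositionality lemma for the stack semantics, and a decomposition of interprocedurally valid paths.

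First I would establish that $\overline{S}(i)$ is exactly the join of $\textit{flatten}(\textit{stack}(\pi))$ over all \emph{complete} (balanced, matched call/return) interprocedural paths $\pi$ from $\entry_i$ to $\exit_i$. Because $\mathcal{Q}$ is a quantale, $F$ is continuous, so $\overline{S} = \bigoplus_n F^n(\lambda i. 0)$. By induction on $n$ I would show that $F^n(\lambda i. 0)(i)$ equals the join of $\textit{flatten}(\textit{stack}(\pi))$ over those complete paths $\pi$ whose recursion nesting depth is at most $n$; the key algebraic step rewrites each $\iinterp{\mathscr{I}}{\cdot}{\pathexp{\entry_i}{\exit_i}}$ as a join over intraprocedural paths via the intraprocedural precision theorem (Theorem~\ref{thm:intrap_mop}), and then uses (Q1') to push $\exists LV_i$ through that join. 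Taking the supremum over $n$ yields the characterization of $\overline{S}(i)$.

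Next I would prove a compositionality lemma for the stack semantics: for any interprocedurally valid $\pi$ to $v$, the value $\textit{flatten}(\textit{stack}(\pi))$ factors along the call/return structure of $\pi$, so that each maximal balanced sub-path realizing a fully-returned call to procedure $i$ contributes exactly $\exists LV_i$ applied to the interpretation of that sub-path's body. This matches the push/pop-with-$\exists$ behaviour of $\widehat{\sem{\cdot}}$ against the algebraic call-edge interpretation $\iinterp{\mathscr{I}}{\overline{S}}{P_i \rightarrow P_j} = \exists LV_i.\, \bigoplus\{\dots\}$; the quantification axioms (Q2)--(Q4) are precisely what let $\exists LV_i$ commute past the surrounding products so the nested flatten coincides with the flattened product. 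Combining this with the fixpoint characterization and the decomposition $\mathscr{I}\langle v\rangle = \mathscr{I}\langle\entry_k\rangle \mul \iinterp{\mathscr{I}}{\overline{S}}{\pathexp{\entry_k}{v}}$, and applying the precision theorem to both the call-graph computation of $\mathscr{I}\langle\entry_k\rangle$ and the intraprocedural suffix $\pathexp{\entry_k}{v}$, I would observe that every $\pi \in \mathsf{IPaths}(v)$ decomposes uniquely into a call chain reaching $P_k$, a family of complete balanced sub-paths for the returned calls along it, and an intraprocedural suffix from $\entry_k$ to $v$ — and conversely every such combination is valid. Distributivity of $\mul$ and $\exists$ over joins then identifies the two joins term by term.

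The main obstacle will be the compositionality lemma of the third step: reconciling the stack machine's scoping discipline with the algebraic quantification operator. Ensuring that the $\exists LV_i$ introduced at each return pop lands on exactly the portion of the flattened product corresponding to the callee's body, and that nested and sequential calls compose correctly, is delicate bookkeeping where (Q2)--(Q4) must be deployed with care; this is the step most prone to subtle error, whereas the fixpoint characterization and the final distributivity argument are routine once it is in place.
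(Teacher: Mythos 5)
Your proposal is correct in substance, but it takes a genuinely different route from the paper's proof. You argue directly inside the quantale: characterize $\overline{S}$ as the limit of the Kleene chain (legitimate, since in a quantale $\mul$, $\closure$, and $\exists$ all preserve directed joins, so $F$ is continuous and the Tarski least fixpoint is $\bigoplus_n F^n(\lambda i.0)$), identify each $F^n(\lambda i.0)(i)$ with the join over balanced paths of bounded nesting depth via the intraprocedural Precision theorem, prove a compositionality lemma for the stack semantics, and finish by decomposing each $\pi \in \mathsf{IPaths}(v)$ into a pending-call chain, balanced excursions, and an intraprocedural suffix, gluing everything with distributivity. The paper instead factors the problem through a syntactic object: it constructs a ``quantified word'' quantale $\mathcal{W}$ whose elements are sets of words over edges enriched with formal $\exists x.(\cdot)$ markers, observes that $\mathscr{W}\tuple{v}$ is, essentially by construction, exactly the set of flattened stacks of the paths in $\mathsf{IPaths}(v)$, and then transports this fact into $\mathcal{Q}$ along a homomorphism $h$ whose graph is a congruence---reusing the generic relational machinery (Lemmas~\ref{lem:pathexp_rel} and~\ref{lem:analysis_rel}, plus transfinite induction on the summary iteration) that already underlies the Abstraction theorem. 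Your approach is the classical Sharir--Pnueli-style coincidence argument: more elementary and self-contained, but it redoes the fixpoint induction and the path-decomposition bookkeeping concretely; the paper's approach buys modularity, since all the combinatorics live in $\mathcal{W}$ where coincidence is nearly definitional, and the congruence lemmas handle the rest in one stroke. One correction to your risk assessment: the compositionality step does not need (Q2)--(Q4) at all. In both the stack semantics and the algebraic computation, the quantifier $\exists LV_i$ is applied in exactly the same position---to the callee's accumulated value at a return pop, and to each pending activation record under \textit{flatten}---so the two joins match term by term using only associativity of $\mul$, distributivity of $\mul$ over joins, and (Q1'); no commutation of quantifiers past products is ever required. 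Indeed, the paper explicitly notes that its word quantale $\mathcal{W}$ \emph{fails} (Q2)--(Q4) and that these axioms are irrelevant to the proof. So the step you feared most is safer than you think, provided you keep each quantifier anchored where the stack machine introduces it rather than trying to move it across the surrounding product.
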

   \begin{proof}
     The proof proceeds in two steps.  First, we define a \emph{quantified
       word} interpretation $\mathscr{W} = \tuple{\mathcal{W},
       \sem{\cdot}_{\mathcal{W}}}$, for which we can show a coincidence
     result.  Then, we show that any interpretation $\mathscr{I} =
     \tuple{\mathcal{Q}, \sem{\cdot}_{\mathcal{Q}}}$ over a quantale
     ``factors'' through the quantified word interpretation, in the sense that
     there is a quantale homomorphism $h$ such that
     \[\mathscr{I}\tuple{v} =
     h(\mathscr{W}\tuple{v})\]

     We first define the set of quantified words $W$ to be the empty word or
     any word recognized by the following context free grammar:
     \begin{align*}
       W ::= & e & e \in E\\
       | & (\exists x. W) & x \in \Var\\
       | & W W
     \end{align*}
     We then define a (weak) quantale
     \[ \mathcal{W} = \tuple{\powerset{W}, \cup, \mul_{\mathcal{W}}, \star_{\mathcal{W}}, \exists_{\mathcal{W}}, \emptyset, \{\epsilon\}} \]
     where
     \begin{align*}
       P\mul_{\mathcal{W}} Q &= \{ ab : a \in P, b \in Q \}\\
       P^{\star_{\mathcal{W}}} &= \bigcup_{n \in \mathbb{N}} P^n\\
         \exists x. P &= \{ (\exists x. a) : a \in P \}
     \end{align*}
     Note that $\mathcal{W}$ does not satisfy axioms (Q2)-(Q4).  It is for
     this reason that we call $\mathcal{W}$ a weak quantale.  However, the
     axioms (Q2)-(Q4) are irrelevant for this proof.

     We define the quantified word interpretation as $\mathscr{W} =
     \tuple{\mathcal{W}, \sem{\cdot}_{\mathcal{W}}}$, where
     $\sem{e}_{\mathcal{W}}$ is defined as the word consisting of a single
     letter $e$.

     Next, we define a function $h : \mathcal{W} \rightarrow \mathcal{Q}$ by
     \[ h(P) = \bigoplus_{\mathcal{Q}} \{ \tilde{h}(a) : a \in P \} \]
     where
     \begin{align*}
       \tilde{h}(e) &= \sem{e}_{\mathcal{Q}}\\
       \tilde{h}(ab) &= \tilde{h}(a) \mul_{\mathcal{Q}} \tilde{h}(b)\\
       \tilde{h}(\exists x.a) &= \exists_{\mathcal{Q}} x. \tilde{h}(a)
     \end{align*}

     We will use the subscripts $\mathcal{Q}$ and $\mathcal{W}$ on
     $\textit{stack}$ and $\textit{flatten}$ to distinguish between which
     interpretation it is being applied in.

     \begin{lemma} \label{lem:flatten}
       For any interprocedural path $\pi$,
       \[h(\textit{flatten}_{\mathcal{W}}(\textit{stack}_{\mathcal{W}}(\pi))) = 
       \textit{flatten}_{\mathcal{Q}}(\textit{stack}_{\mathcal{Q}}(\pi)) \]
     \end{lemma}
     \begin{proof}
       By induction on $\pi$.
     \end{proof}

     \begin{lemma} \label{lem:factor}
       For any $v$,
       \[ \mathscr{I}\tuple{v} = h(\mathscr{W}\tuple{v}) \]
     \end{lemma}
     \begin{proof}
       Define a relation $\sigma \subseteq \mathcal{W} \times \mathcal{Q}$ as
       the graph of the function $h$.  That is,
       \[ \sigma = \{ \tuple{P,h(P)} : P \in \mathcal{W} \} \]
       It is easy to show that $\sigma$ is a congruence w.r.t. the \QPKA{}
       operations.

       Next, we must show that for any $i$,
       $\sound{\overline{S}_{\mathcal{W}}(i)}{\overline{S}_{\mathcal{Q}}(i)}$.
       We may prove this by induction on the transfinite iteration
       sequence defining $\overline{S}_{\mathcal{W}}$ and
       $\overline{S}_{\mathcal{Q}}$ (using Lemma~\ref{lem:pathexp_rel} for the
       induction step).

       Finally, we may apply Lemma~\ref{lem:analysis_rel} to get
       \[ \sound{\mathscr{W}\tuple{v}}{\mathscr{I}\tuple{v}} \]
       whence $\mathscr{I}\tuple{v} = h(\mathscr{W}\tuple{v})$.
     \end{proof}

     It is easy to see that $\mathscr{W}\tuple{v}$ is exactly the set of
     interprocedural paths to $v$, where calls and returns have been replaced
     by existential quantification of the appropriate variables.  This can be
     expressed formally as
     \[w \in \mathscr{W}\tuple{v} \iff \exists \pi \in \mathsf{IPaths}(v). \{w\} = \textit{flatten}_{\mathcal{W}}(\textit{stack}_{\mathcal{W}}(\pi))\]
     Or equivalently, as
     \[ \mathscr{W}\tuple{v} = \bigcup_{\pi \in \mathsf{IPaths}(v)} \textit{flatten}_{\mathcal{W}}(\textit{stack}_{\mathcal{W}}(\pi)) \]

     We may then conclude
     \begin{align*}
      \mathscr{W}\tuple{v} &= \bigcup_{\pi \in \mathsf{IPaths}(v)} \textit{flatten}_{\mathcal{W}}(\textit{stack}_{\mathcal{W}}(\pi))\\
      h(\mathscr{W}\tuple{v}) &= h\Big(\bigcup_{\pi \in \mathsf{IPaths}(v)} \textit{flatten}_{\mathcal{W}}(\textit{stack}_{\mathcal{W}}(\pi))\Big)\\
      \mathscr{I}\tuple{v} &= h\Big(\bigcup_{\pi \in \mathsf{IPaths}(v)} \textit{flatten}_{\mathcal{W}}(\textit{stack}_{\mathcal{W}}(\pi))\Big) & \text{Lemma~\ref{lem:factor}}\\
      \mathscr{I}\tuple{v} &= \bigoplus_{\pi \in \mathsf{IPaths}(v)} h(\textit{flatten}_{\mathcal{W}}(\textit{stack}_{\mathcal{W}}(\pi))) & \text{Def'n of } h\\
      \mathscr{I}\tuple{v} &= \bigoplus_{\pi \in \mathsf{IPaths}(v)} \textit{flatten}_{\mathcal{Q}}(\textit{stack}_{\mathcal{Q}}(\pi)) & \text{Lemma~\ref{lem:flatten}}
     \end{align*}
   \end{proof}

\section{Experimental results}

We present a detailed comparison of {\sc LRA}, {\sc UFO}, and {\sc InvGen} on
our benchmark suite.  This includes result only for those benchmarks where one
of the answers among the three tools differed from the other two.
\begin{figure}
\begin{tabular}{|l|c|c|c|} \hline

 Benchmark Name           &  {\sc LRA}    & {\sc UFO}     & {\sc InvGen}                \\ \hline \hline

 apache-escape-absolute   & unsafe  & unsafe  & safe                  \\
 apache-get-tag           & safe    & unsafe  & safe                  \\
 gulwani\_cegar2           & safe    & safe    & unsafe                \\
 gulwani\_fig1a            & safe    & safe    & unsafe                \\
 half                     & safe    & unsafe  & unsafe                \\
 heapsort                 & safe    & unsafe  & safe                  \\
 heapsort2                & safe    & unsafe  & safe                  \\
 heapsort3                & safe    & unsafe  & safe                  \\
 id\_trans                 & unsafe  & unsafe  & safe (false negative) \\
 large\_const              & safe    & timeout & safe                  \\
 MADWiFi-encode\_ie\_ok     & safe    & unsafe  & safe                  \\
 nest-if6                 & safe    & safe    & unsafe                \\
 nest-if8                 & safe    & unsafe  & safe                  \\
 nested6                  & safe    & unsafe  & safe                  \\
 nested7                  & safe    & crash   & unsafe                \\
 nested8                  & safe    & timeout & safe                  \\
 nested9                  & safe    & unsafe  & unsafe                \\
 rajamani\_1               & safe    & timeout & safe                  \\
 sendmail-close-angle     & safe    & unsafe  & safe                  \\
 seq-len                  & safe    & safe    & unsafe                \\
 seq2                     & safe    & crash   & safe                  \\
 split                    & unsafe  & timeout & safe                  \\
 svd1                     & safe    & crash   & safe                  \\
 svd2                     & safe    & crash   & safe                  \\
 svd3                     & safe    & crash   & safe                  \\
 svd4                     & unsafe  & crash   & safe                  \\
 up-nd                    & safe    & unsafe  & safe                  \\
 up2                      & safe    & unsafe  & safe                  \\
 up3                      & safe    & unsafe  & safe                  \\
 up4                      & safe    & unsafe  & safe                  \\
 up5                      & safe    & unsafe  & safe                  \\ \hline \hline

 for\_infinite\_loop\_1\_safe & safe    & safe    & crash                 \\
 for\_infinite\_loop\_2\_safe & safe    & safe    & crash                 \\
 sum01\_safe               & safe    & unsafe  & unsafe                \\
 sum02\_safe               & unsafe  & unsafe  & unsafe                \\
 terminator\_02\_safe       & safe    & safe    & crash                 \\
 token\_ring01\_safe        & unsafe  & safe    & --                    \\
 toy\_safe                 & timeout & timeout & --                    \\ \hline \hline 

 Wrong Results (total)     & 6       & 29      & 14                    \\ \hline

\end{tabular}
\caption{Experimental Results of comparing LRA analysis with UFO and InvGen tools on 
the set of InvGen benchmarks, and the set of SVComp benchmarks. ``--'' indicates that the
benchmark has function calls, which not currently supported by {\sc InvGen}. The last row indicates
how many wrong results each tool had (incorrect result, timeout, or crash) over the entire set of benchmarks. }
\end{figure}

\end{document}